\newtheorem{proposition}{Proposition}
\newtheorem{claim}{Claim}
\newtheorem{problem}{Problem}
\newtheorem{remark}{Remark}[section]
\newtheorem{definition}{Definition}[section]
\newtheorem{theorem}{Theorem}[section]
\newtheorem{lemma}{Lemma}[section]
\newtheorem{corollary}{Corollary}[section]
\newcommand{\sharpsat}{\ensuremath{\#\text{3-SAT}}}
\newcommand{\snc}{{\sf snc}\mbox{-}}
\newcommand{\sign}{{\sf sgn}}
\newcommand{\cdet}{{\sf C}\mbox{-}{\sf det}}
\newcommand{\cperm}{{\sf C}\mbox{-}{\sf perm}}
\newcommand{\near}{{\sf near}}
\newcommand{\cut}{{\sf cut}}
\newcommand{\rank}{{\sf rank}}
\newcommand{\mcoeff}{{\sf mcoeff}}
\newcommand{\sym}{{\sf Sym}}
\newcommand{\pcoeff}{{\sf pcoeff}}
\newcommand{\mc}{{\sf mc}}
\newcommand{\pc}{{\sf pc}}
\newcommand{\depth}{{\sf depth}}
\newcommand{\size}{{\sf size}}
\newcommand{\planar}{{\sf planar-}}
\newcommand{\perm}{{\sf perm}}
\newcommand{\sharpp}{\ensuremath{\#{\sf P}}\xspace}
\title{New Algorithms and Hard Instances for Non-Commutative Computation}
\author{Christian Engels\footnote{Saarland University, Department of Computer Science, Germany {\tt
            engels@cs.uni-saarland.de}} \and B.\ V.\ Raghavendra Rao\footnote{IIT Madras, Department
        of Computer Science and Engineering, Chennai, India, {\tt bvrr@cse.iitm.ac.in}}}
\begin{document}
\maketitle
\begin{abstract}
      Recent developments on the complexity of the non-com\-mu\-ta\-tive determinant and
     permanent [Chien et al.\ STOC 2011, Bl\"aser ICALP 2013, Gentry CCC 2014]  have settled the complexity of non-com\-mu\-ta\-tive determinant with respect to the structure of the underlying algebra. Continuing the research further, we look to obtain  more insights on hard instances of non-commutative permanent and determinant.

 We show that any Algebraic Branching Program (ABP) computing the Cayley permanent of a collection of disjoint directed two-cycles with distinct variables as edge labels requires exponential size. For  graphs where every connected component  contains at most six vertices, we show that evaluating the Cayley permanent over any algebra containing $2\times 2$ matrices is $\#{\sf P}$ complete.
 
Further,  we obtain efficient algorithms for computing the Cayley permanent/determinant on graphs with bounded component size, when vertices within each component are not far apart from each other in the Cayley ordering. This gives a tight upper and lower bound for size of ABPs computing the permanent of disjoint two-cycles.  Finally, we exhibit more families of non-commutative polynomial evaluation problems that are complete for $\#{\sf P}$.
 
Our results demonstrate that apart from the structure of underlying algebras, relative ordering of the variables plays a crucial role in determining the complexity of non-commutative polynomials.
 
\end{abstract}

\section{Introduction}
\subsubsection*{Background}
The study of algebraic complexity theory was initiated by Valiant in his seminal paper \cite{Val79b}
where he showed that computing the permanent of an integer matrix is $\sharpp$ complete. Since then, separating the
complexities of permanent and determinant has been the focal point of this research area which led to the
development of several interesting results and techniques. (See~\cite{Bur00,SY10} for good
surveys on these topics.)

The underlying ring plays an important role in algebraic complexity theory. While the
research  focused mainly on the  permanent vs determinant problem over fields and commutative rings there has also
been an increasing amount of interest over non-commutative
algebras. Nisan~\cite{NisanNonComm}  was the first  to consider the complexity of these two polynomials
over non-commutative algebras. He showed that any non-commutative arithmetic formula over  the
free $\mathbb{K}$ algebra   computing the permanent or determinant of an $n\times n$ matrix requires
size $2^{\Omega(n)}$ where $\mathbb{K}$ is any field.  Later on, 
this was generalized to other classes of  algebras in~\cite{CS07}. More recently, Limaye, Malod and Srinivasan~\cite{LMS15} generalized Nisan's technique to prove lower bounds against more general classes of non-commutative circuits.  Nisan's work left the problem of determining the arithmetic circuit complexity of non-commutative determinant as an open question.

 In a significant breakthrough, Arvind and Srinivasan~\cite{AS10} showed that computing the Cayley determinant is $\sharpp$ hard over certain   matrix
algebras.
 Finally this question was settled
by Bl\"aser~\cite{Bla13} who classified  such algebras. Further, Gentry~\cite{Gentry14} simplified these reductions.

\subsubsection*{Motivation}
\label{sec:motivation}
Though the studies in~\cite{AS10,Bla13} highlight the role of the underlying algebra in determining the complexity of the
non-commutative determinant they do not shed much light on the combinatorial structure of non-commutative polynomials that are
$\sharpp$ hard. One could ask: \emph{Does the hardness stem from the underlying algebra or are there inherent properties of polynomials
that make them $\sharpp$ hard in the non-commutative setting?} Our  results  in this paper indicate that   relative ordering among the variable also plays an important role in the hardness of certain non-commutative polynomials.

As a first step,  we look for   polynomials that are easier to compute  than the determinant in the commutative setting and   whose non-commutative versions are   $\sharpp$ hard. Natural candidate polynomials are the elementary
symmetric polynomials and special cases of determinant/permanent. One way to obtain special cases of
determinant/permanent would be to restrict the structure of the underlying graph.  For example, let $G$ be a directed graph consisting of $n$ cycles
$(0,1),(2,3),\dots,(2n-2,2n-1)$ of length two with self loops where each edge is labeled by a
distinct variable. The permanent of $G$, $\perm(G)$, is given by $\prod_{i=0}^{n-1}(
x_{2i,2i}x_{2i+1,2i+1} + x_{2i,2i+1}x_{2i+1,2i})$ where $x_{i,j}$ is the variable labeling of the
edge $(i,j)$. This is one of the easiest to compute but non trivial special case of permanent. 

\subsubsection{Our Results}
\label{sec:main-results}
We study the complexity of the Cayley permanent (\cperm) on special classes of graphs. We exhibit
a family of collections of disjoint two-cycles for which any algebraic branching program (ABP) computing
the $\cperm$ should have size $2^{\Omega(n)}$ (Corollary~\ref{cor:explicit-perm}).
Further, we exhibit a parameter
$\cut(G)$  (see Section~\ref{sec:lb} for the definition) for a collection $G$ of disjoint two-cycles on $n$ vertices such that any ABP computing  $\cperm(G)$ has size
$2^{\Theta(\cut(G))}$ (Theorem~\ref{thm:cut-lowerbound}).   This  makes the lower bound in Corollary~\ref{cor:explicit-perm} tight up to a constant factor in the exponent. It should be noted that our results also hold for the
case of the Cayley determinant ($\cdet$) on such graphs. We also observe that for graphs of component size greater or equal to
six the problem of evaluating $\cperm$ is $\sharpp$ complete (Theorem~\ref{thm:gentry}).
 
On the positive side, for  graphs where each strongly connected  component has at most $c$ vertices we obtain an ABP of size
$n^{O(c)}c^{\near(G)}$ computing the \cperm\  (Theorem~\ref{thm:ub-bounded}) where $\near(G)$ is a parameter (see
Definition~\ref{def:near}) depending on the labeling
of vertices on the graph.

We demonstrate a non-commutative variant of the elementary symmetric polynomial that is $\sharpp$ hard
over certain algebras (Theorem~\ref{thm:sym}). Finally, we show that computing  $\cperm$ on rank one
matrices is $\sharpp$ hard.

\subsubsection{Related Results}
The study of commutative permanent on special classes of matrices was initiated by Barvinok
\cite{Bar96} who gave a polynomial time algorithm for computing the permanent of rank one matrices over a
field. More recently, Flarup, Koiran and Lyaudet~\cite{FKL07} showed that computing the permanent of
bounded tree-width graphs can be done by polynomial size formulas. This was further extended by
Flarup and Lyaudet \cite{FL10} to other width measures on graphs. Datta et al.\ \cite{DKLM10} showed that
computing the permanent on planar graphs is as hard as the general case. 

\subsubsection{Comparison to other results}
Results reported in~\cite{AS10,Bla13,Gentry14} highlight the importance of the underlying
algebra and characterizes algebras for which $\cdet$ is $\sharpp$ hard.  In contrast, our results shed
light on the role played by the order in which vertices are labeled in a graph. For example, the commutative permanent of disjoint two-cycles has a depth three formula given by $\prod_{i=0}^{n-1}(
x_{2i,2i}x_{2i+1,2i+1} + x_{2i,2i+1}x_{2i+1,2i})$ whereas  \cperm\  on almost all orderings of vertices requires exponential size ABPs. 
 

\section{Preliminaries}
\label{sec:prelim}
For definitions of complexity classes the reader is referred to any of the standard text books on Computational Complexity Theory, e.g.,~\cite{AB09}.
Let $\mathbb{K}$ be a field and $ S=\mathbb{K}[x_{1},\dots,x_{n}]$ be the  ring of polynomials over $\mathbb{K}$ in $n$ variables. Let $R$ denote a non-commutative ring with
identity and associativity property. Unless otherwise stated,    we assume that $R$ is an algebra over $\mathbb{K}$ and contains the algebra of $n\times n$ matrices with entries from $\mathbb{K}$ as a subalgebra.  

An \emph{arithmetic circuit} is a directed acyclic graph  where  every vertex has an in-degree either zero or two. Vertices of zero in-degree  are called  {\em input} gates and are labeled by   elements in $R\cup \{x_1\ldots, x_n\}$.   Vertices of in-degree two  are called {\em internal} gates and   have their labels from $\{\times, +\}$.   An arithmetic circuit has  at least one vertex of out degree zero called an  {\em   output}  gate. We assume that an arithmetic circuit has  exactly one output gate. A  polynomial $p_g$ in $R[x_1,\ldots, x_n]$ can be associated with every gate $g$  of an arithmetic circuit defined in an inductive fashion. Input gates compute
their label. Let $g$ be an internal gate with left  child $f$ and right child $h$,  then $p_g=p_f~{\sf op}~ p_h$  where ${\sf op}$ is the label of $g$.   The polynomial computed by the circuit is the polynomial at one of the output gates and denoted by $p_C$. The size of an arithmetic circuit is the number of gates in it and is denoted by $\size(C)$. We restrict ourselves to circuits where  coefficients of the polynomials computed at every gate can be represented in at most ${\sf poly}(\size(C))$ bits. 

An {\em algebraic branching program} (ABP) is a directed acyclic graph with two special nodes $s$,
$t$ and edges labeled by variables or constants in $R$. The weight of a path is the product of the
weights of its edges. The polynomial computed by an ABP $P$ is the sum of the weights of all $s\leadsto t$ paths in  $P$, and is denoted by $p_P$.

Over a non-commutative ring,   there are many possibilities for defining the
determinant/permanent of a matrix  depending on the ordering of the variables  (see for example~\cite{AH96}). We will use the well known definitions of the
\emph{Cayley determinant} and \emph{Cayley permanent}. Let $X=(x_{i,j})_{1\le i,j\le n}$ be an $n\times n$ matrix with distinct variables $x_{i,j}$.  Then
{\small
\begin{align*}
  {\cdet}(X) = \sum_{\sigma \in S_{n}} \text{sgn}(\sigma) x_{1,\sigma(1)}\cdots x_{n,\sigma(n)}; ~~ \mbox{ and }
   \cperm(X) = \sum_{\sigma \in S_{n}} x_{1,\sigma(1)}\cdots x_{n,\sigma(n)}.
\end{align*}
}
In the above,  $S_n$ denotes the set of all permutations on $n$ symbols.
Note that $\cdet$ and  $\cperm$ can also be seen as functions taking $n\times n$ matrices with
entries from $R$ as input.  Given a weighted directed graph $G$ on $n$ vertices with weight  $x_{i,j}$ for the  edge $(i,j)\in E(G)$, the Cayley permanent of $G$ denoted by $\cperm(G)$ is the permanent of the weighted adjacency matrix of $G$. It is known that~\cite{Bur00} $\cperm(G)$ is the sum of the Cayley weights of all cycle covers of $G$.

The tensor product of two matrices $A,B\in \mathbb{K}^{n\times n}$ with entries $a_{i,j},b_{i,j}$
 is denoted by $A\otimes B$  and is given by
$$A\otimes B = \left(\begin{matrix}  a_{1,1}B& \cdots& a_{1,n}B\\
                        \vdots& \ddots& \vdots\\
                        a_{n,1}B& \cdots& a_{n,n}B
        \end{matrix}\right).
$$
  Let $P$ be an  ABP
 over disjoint sets of  variables $X\cup Y$, with $|X|=n$ and $|Y|=m$. Let $p_P(X,Y)$ be the
 polynomial computed by $P$. $P$ is said to be {\em read once certified}~\cite{MR13} in $Y$ if  there are numbers
 $0=i_0 < i_1< \dots < i_m$ where $i_m$ is at most the length of $P$  and there is a permutation $\pi\in S_m$ such that
  between layers from $i_j$ to $i_{j+1}$ no variable other than $y_{\pi(j+1)}$ from the set  $Y$  appears as a label.
  We use the following result from~\cite{MR13}. The proof given in~\cite{MR13} works only in the commutative setting, see  Appendix~\ref{app:expsum} for the non-commutative case.
 \begin{proposition}[\cite{MR13}]
\label{prop:expsum} Let $P$ be an ABP on $X\cup Y$ read-once certified in $Y$. Then the  polynomial
$
\sum_{e_1,e_2,\ldots, e_m\in\{0,1\}^{m}} p_P(X, e_1,\ldots, e_m)$
can be computed  by an ABP of size ${\sf poly}(\size(P))$.
\end{proposition}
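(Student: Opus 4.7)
The plan is to exploit the block structure provided by the read-once certification and turn the summation into a parallel duplication of each block. Assuming (w.l.o.g.) that $P$ is layered, the indices $0=i_0<i_1<\cdots<i_m$ carve $P$ into blocks $B_1,\ldots,B_m$ (plus possibly a trailing suffix), where $B_j$ spans layers $i_{j-1}$ to $i_j$ and contains only $y_{\pi(j)}$ among the $Y$-variables. Let $V_j$ denote the set of vertices at layer $i_j$. Define $M_j$ to be the $|V_{j-1}|\times|V_j|$ matrix whose $(u,v)$ entry is the sum of weights (multiplied in source-to-sink order) of all $u\leadsto v$ paths inside $B_j$. Then $p_P(X,Y)$ equals the $(s,t)$-entry of the matrix product $M_1 M_2 \cdots M_m$, and every entry of $M_j$ is a polynomial over $R$ in the variables $X \cup \{y_{\pi(j)}\}$.

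Since the $Y$-variables appearing in different $M_j$'s are pairwise disjoint, two-sided distributivity over the associative (but non-commutative) ring $R$ lets us push the sum inside the matrix product:
\[
\sum_{e_1,\ldots,e_m\in\{0,1\}^m} p_P(X,e_1,\ldots,e_m) \;=\; \bigl(M_1' M_2' \cdots M_m'\bigr)[s,t],
\]
where $M_j'$ is obtained from $M_j$ by substituting $y_{\pi(j)}=0$, substituting $y_{\pi(j)}=1$, and adding the two resulting matrices. To realize $M_j'$ by an ABP-gadget, take two disjoint copies of the block $B_j$, duplicating only the \emph{internal} vertices but sharing the boundary layers $V_{j-1}$ and $V_j$; in the first copy substitute $y_{\pi(j)} = 0$ (delete edges labelled $y_{\pi(j)}$) and in the second substitute $y_{\pi(j)} = 1$ (relabel such edges by the ring identity). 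Any $u\leadsto v$ path in the glued gadget is trapped inside exactly one copy, so the sum of its path-weights is exactly $M_j[u,v]|_{y_{\pi(j)}=0} + M_j[u,v]|_{y_{\pi(j)}=1} = M_j'[u,v]$.

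Concatenating the modified blocks produces a layered ABP of size at most $2\,\size(P)$ computing the desired polynomial. The only step that needs care in the non-commutative setting is the commutation of the summation with the matrix product, which is justified by two-sided distributivity: for each fixed $j$, the sum over $e_j$ touches only the $j$-th factor, and the remaining factors (being independent of $e_j$) split off cleanly on the left and on the right. The main obstacle I anticipate when writing the proof out formally is the bookkeeping needed to show that gluing at the boundary layers introduces no spurious ``cross-copy'' paths; this is precisely why the internal vertices of each block must be duplicated while the boundary layers $V_{j-1}, V_j$ are kept shared.
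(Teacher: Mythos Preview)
Your proposal is correct and is essentially the same argument as the paper's: partition $P$ into blocks along the certifying layers, factor $p_P$ as a product of block transfer matrices (the paper writes this out as an indexed sum over intermediate vertices), push the Boolean sum through the product using distributivity so that each block is replaced by the sum of its two substitutions, and realize this by duplicating the interior of each block while merging the boundary layers to obtain an ABP of size at most $2\,\size(P)$. The only cosmetic difference is your use of matrix-product notation in place of the paper's explicit sum over intermediate vertices $k_1,\ldots,k_m$.
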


  Let  ${\cal A}$  be  a non-de\-ter\-min\-is\-tic   $s$-space bounded algorithm  that   uses non-de\-ter\-min\-is\-tic bits in a read-once fashion and  outputs a monomial on each of the accepting paths.    We assume that a non-commutative monomial is output as a string in a write-only tape and    non-deterministic paths are represented by binary strings $e\in \{0,1\}^m$, $m\le 2^{O(s)}$.     The polynomial $p_{\cal A}$ computed by $\cal A$ is the sum of the monomial output on each of the accepting paths of ${\cal A}$, i.e., $p(x_1,\ldots, x_n) = \sum_{e} A(x_1,\ldots, x_n, e)$, where the sum  is taken over all accepting paths $e$ of ${\cal A}$, and $A(x_1,\ldots, x_n, e)$ denotes the monomial output along path  represented by $e$.    

\begin{proposition}[folklore] 
\label{lem:folklore}
Let $A(X)$ be an $s$-space bounded  non-deterministic algorithm  as above.  There is a non-commutative ABP $P$ of size $2^{O(s)}$ that computes the polynomial $p_{A}(X)$.

\end{proposition}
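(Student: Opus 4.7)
The plan is to realize $P$ as the time-layered configuration graph of ${\cal A}$. Since ${\cal A}$ is $s$-space bounded there are only $2^{O(s)}$ distinct configurations, and hence every accepting computation halts within $T = 2^{O(s)}$ steps. I would create $T+1$ layers, each containing one node per configuration (a total of $2^{O(s)}$ vertices overall), make the node in layer $0$ corresponding to the start configuration the source $s$, and introduce a fresh sink $t$ attached by $1$-labeled edges from every accepting configuration in every layer.

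Edges across consecutive layers mirror the transition function. A transition from $C$ to $C'$ that appends a symbol $y \in R \cup \{x_1,\dots,x_n\}$ to the write-only output tape contributes an edge from the layer-$i$ copy of $C$ to the layer-$(i{+}1)$ copy of $C'$ labeled $y$; a transition that outputs nothing contributes an edge labeled $1$. Nondeterministic branching is encoded by adding one such edge per available choice at a configuration. Because the nondeterministic bits are used read-once, different accepting computations correspond to distinct nondeterministic strings $e$, and each such $e$ selects a unique $s\leadsto t$ path in $P$.

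The size bound follows immediately: the vertex count is $(T+1)\cdot 2^{O(s)} + 1 = 2^{O(s)}$, and the branching factor at each vertex is a constant for a standard nondeterministic Turing machine model. For correctness, the key observation is that the product of edge labels along an $s\leadsto t$ path, read in path order, equals the monomial $A(x_1,\dots,x_n,e)$ written on ${\cal A}$'s output tape, because both concatenations respect the same temporal order in which ${\cal A}$ emits symbols. Summing over all $s\leadsto t$ paths therefore yields $\sum_{e} A(x_1,\dots,x_n,e) = p_{\cal A}(X)$.

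The main point that needs care is the interplay between non-commutativity and layering: one must resist the temptation to identify the two copies of a configuration appearing in different layers, since the resulting contraction would create cycles and break path-weight multiplication in the non-commutative setting. The explicit time-layering serializes the output, keeps $P$ acyclic, and ensures that ABP semantics recover exactly $p_{\cal A}$. The read-once use of nondeterministic bits is exactly what rules out any double counting when we pass from summation over accepting computations to summation over $s\leadsto t$ paths.
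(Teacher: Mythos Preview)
Your construction is correct and is exactly the standard folklore argument: unroll the configuration graph in time, label transitions by whatever symbol the machine writes to the output tape, and observe that path products in the resulting layered DAG reproduce the output monomials in the correct non-commutative order. The paper itself supplies no proof for this proposition (it is merely tagged ``folklore''), so there is nothing further to compare against; your write-up is the expected one.

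One minor remark on the final paragraph: the bijection between $s\leadsto t$ paths in $P$ and accepting computations of ${\cal A}$ already follows from the time-layering alone, independent of the read-once hypothesis. What read-once actually buys is the identification of accepting computations with accepting strings $e$, so that $\sum_e A(X,e)$ coincides with the sum over computations. This is a point of exposition only and does not affect the validity of your argument.
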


\section{An algorithm for Cayley Permanent}
\label{sec:algo}
In this section, we give an algorithm for $\cperm$ that is parameterized by the maximum difference between labelings of vertices in individual components.

In what follows, we identify the vertices of a graph with the set $[n]$. A directed graph $G$ on $n$
vertices is said  to have {\em  component size bounded by $c$} if every strongly connected component of $G$
contains at most $c$ vertices where $c>0$.   We assume that edges of $G$ are labeled by distinct variables.  Firstly, we define a parameter that measures the closeness of labelings in each component.
\begin{definition}
\label{def:near}
Let $G$ be  a directed graph.  The {\em nearness} parameter $\near(C)$ of a strongly connected component $C$ of $G$ is defined as 
$\near(C) = \max_{ i, j\in C} |i-j|$.
The nearness parameter of $G$ is defined as
$\near(G) = \max_{C}\near(C)$,
where the maximum is taken over the set of all strongly connected components in $G$.
\end{definition}
\begin{theorem}
\label{thm:ub-bounded}
Let $G$ be a directed graph with  component  size
bounded by $c$ and  edges labeled by distinct variables. Then there exists  an ABP of size $n^{O(c)} c^{\near(G)}$ computing  the Cayley
permanent of the adjacency matrix of $G$.
\end{theorem}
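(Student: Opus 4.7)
The plan is to design a non-deterministic read-once algorithm $\mathcal{A}$ in the sense of Proposition~\ref{lem:folklore} whose accepting paths are in bijection with the cycle covers of $G$ and which emits the Cayley-ordered monomial of each cycle cover on its output tape. The structural fact behind $\mathcal{A}$ is that every directed cycle lies inside a single strongly connected component, so a permutation $\sigma \in S_n$ contributes to $\cperm(G)$ exactly when $\sigma(i)$ belongs to the SCC $C(i)$ of $i$ for every $i$; such a $\sigma$ restricts to a bijection on each SCC. Plugging the space bound derived below into Proposition~\ref{lem:folklore} will then yield an ABP of the required size.

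The algorithm sweeps $i = 1, 2, \ldots, n$ in Cayley order. At step $i$ it maintains, for every SCC $C$ that is currently \emph{active} (meaning $\min(C) \le i \le \max(C)$), a subset $U_C \subseteq C$ recording the images already spent by sources $j < i$ with $j \in C$. It non-deterministically guesses $\sigma(i) \in C(i) \setminus U_{C(i)}$, writes $x_{i,\sigma(i)}$ onto the output tape, and updates $U_{C(i)}$ by adding $\sigma(i)$. As soon as $i = \max(C)$ for some component $C$, the algorithm checks that $U_C = C$ (rejecting otherwise) and retires $C$ from the state. Since the SCC structure and vertex labels are hard-wired into $\mathcal{A}$, the unused-image constraint at every step together with the closing check guarantees that accepting paths correspond bijectively to cycle covers of $G$, and that the monomial emitted along such a path is precisely the Cayley-ordered monomial of that cycle cover.

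The whole state between steps consists of the index $i$ (costing $O(\log n)$ bits) together with the tuple $(U_C)$ ranging over active $C$. The key combinatorial estimate is that this tuple admits at most $c^{\near(G)}$ values: the sets $C \cap [i-1]$ for distinct active $C$ are pairwise disjoint and each is contained in $[\min(C), i-1] \subseteq [i-\near(G), i-1]$, so $\sum_C |U_C| = \sum_C |C \cap [i-1]| \le \near(G)$; combining this with $\binom{|C|}{|U_C|} \le c^{|U_C|}$ yields $\prod_C \binom{|C|}{|U_C|} \le c^{\sum_C |U_C|} \le c^{\near(G)}$. Therefore $\mathcal{A}$ runs in space $O(\log n + \near(G) \log c)$, and Proposition~\ref{lem:folklore} converts this into an ABP of size $2^{O(\log n + \near(G)\log c)} = n^{O(1)} \cdot c^{O(\near(G))}$, which is absorbed into the advertised bound $n^{O(c)} c^{\near(G)}$.

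The main obstacle I anticipate is precisely the combinatorial estimate $\sum_C |U_C| \le \near(G)$, where the sum is over active components at step $i$. This is what prevents a naive per-component bound of $2^c$ from compounding multiplicatively across active components to a much weaker $2^{c \cdot \near(G)}$, and it is the place where the nearness parameter enters in an essential way; everything else (correctness of the non-deterministic algorithm and the invocation of Proposition~\ref{lem:folklore}) is routine once the state space has been compressed to $c^{\near(G)}$.
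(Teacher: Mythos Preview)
Your proof is correct and follows the same overall strategy as the paper—build a small-space non-deterministic algorithm whose accepting computations enumerate cycle covers and emit their Cayley-ordered monomials, then invoke Proposition~\ref{lem:folklore}—but the organization differs in a way worth noting. The paper iterates over the strongly connected components $C_1,\ldots,C_r$ (sorted by least vertex), guesses the entire cycle cover of each $C_i$ at once, and maintains a buffer $T$ of vertices whose outgoing edge has been guessed but not yet emitted; its key estimate is $|T|\le\near(G)+c$, and it additionally threads the argument through Proposition~\ref{prop:expsum} to sum out the non-deterministic choices. You instead sweep the vertices $1,\ldots,n$ in Cayley order directly, guessing one edge at a time and buffering the used targets $U_C$ for each active component; your estimate $\sum_C|U_C|\le\near(G)$ plays the same role as the paper's bound on $|T|$. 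Your route is arguably cleaner—it avoids the detour through Proposition~\ref{prop:expsum} and in fact yields the nominally sharper bound $n^{O(1)}c^{O(\near(G))}$—while the paper's route makes the per-component cycle-cover structure more explicit. Two small points to tidy: the claim that accepting paths biject with cycle covers is literally true only if you also restrict the guess of $\sigma(i)$ to out-neighbours of $i$ (otherwise accepting paths biject with SCC-preserving permutations, and the non-edge terms simply vanish because the corresponding matrix entry is $0$); and the passage from the state count $c^{\near(G)}$ to the space bound $O(\near(G)\log c)$ deserves one sentence on the encoding, e.g.\ store $\sigma(j)$ as a $\lceil\log c\rceil$-bit index into $C(j)$ for each $j<i$ lying in an active component, of which there are at most $\near(G)$ by your own estimate.
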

\begin{proof}
For an edge $(i,j)\in E(G)$, let $x_{i,j}$ denote the variable label on $(i,j)$. Let $A_G$ be the weighted
adjacency matrix of $G$.   Note that, the Cayley permanent of $A_G$ equals the sum of weights of
cycle covers in $G$ where the weight of a cycle cover $\gamma$  is the product of labels of edges in $\gamma$ multiplied in the Cayley order.
  
We  describe a non-deterministic small-space bounded procedure $P$ that  guesses a cycle cover $\gamma$
in $G$ and outputs the product of weights of $\gamma$  with respect to the Cayley ordering as a string of variables. Additionally,
we ensure that the algorithm $P$ uses the non-deterministic bits in a read-once fashion, and by the
closure property of ABP under read-once exponential sums (c.f.\ Proposition~\ref{prop:expsum}),  we obtain the required ABP.
Suppose $C_1, \ldots, C_r$ are the strongly connected components of $G$, sorted in the ascending order of the smallest  vertex in each component. Then any cycle cover $\gamma$ of $G$ can be decomposed into  cycle cover $\gamma_i$ of the component $C_i$. The only difficulty in computing the weight of $\gamma$ is the Cayley ordering of the variables. However, with a careful implementation, we show that this can be done in space $O(\log c\cdot  \near(G)+ \log n)$.  We represent a  cycle cover in $G$ as a permutation $\gamma$ where $\gamma(i)$ is the successor of vertex $i$ in the cycle cover represented by $\gamma$.
We begin with the description of the non-deterministic procedure $P$. Let  $T$ represent the
set of vertices $v$ in the partial  cover that is being built by the procedure where the weight of
the edge going out of $v$ is not yet output, and
$\text{pos}$ the current position going from $1$ to $n$.
\begin{enumerate}
\item Initialize $\text{pos} :=1$, $ T:=\emptyset$, $\gamma:=$ the  cycle cover of the empty graph, $f=1$.
\item  For $1\le i\le r$  repeat   steps 3 \& 4.   
\item Non-deterministically guess a cycle cover $\gamma'$ in  $C_i$, and set $\gamma=\gamma\uplus
\gamma'$, $T=T\cup V(C_i)$ where $V(C_i)$ is the set vertices in $C_i$.
\item While there is a vertex $k\in T$ with $ k=\text{pos}$ do the following:
 \begin{itemize}
  \item[ ]  Set $f=f\cdot x_{k,\gamma(k)}$;  $\text{pos}:=\text{pos}+1$; and $T:=T\setminus \{k\}$. 
 \end{itemize}
\item If $\text{pos}= n$, then  output $f$ and accept.
\end{enumerate} 
Let ${\sf Acc}(G)$ be the sum of the monomials  output by the algorithm on all accepting paths. 
\begin{claim}
\label{cla:space}
${\sf Acc}(G)= \cperm(G)$. Moreover, the algorithm $P$ uses $O(\log c\cdot\near(G)+\log n)$ space, and is read-once on the non-deterministic bits.
\end{claim}
\begin{proof}[of the Claim]
    Recall that a permutation $\gamma\in S_n$ is a cycle cover of $G$ if and only if it can be
    decomposed into vertex disjoint cycle covers $\gamma_1, \ldots, \gamma_r$ of the strongly
    connected components $C_1,\ldots, C_r$ in $G$. Thus Step~3 enumerates all possible cycle covers
    in $G$.  Also, the weights output at every accepting path are in the Cayley order.

We have $T= \{k~|~ \text{pos} <  k ~\mbox{and $k$ occurs in the components already explored} \}$.  
 Firstly, we argue that at any point in time in the algorithm, $|T|\le \near(G)+c$.  Suppose the
 algorithm  has processed components  up to $C_i$ and is yet  to process $C_{i+1}$.  Let $\mu=\max
 _{v\in T} v$. Since the components are in ascending order with respect to the smallest vertex in them, the component $C_j$ with $\mu\in C_j$ must have $\near(C_j)\ge \mu-\text{pos}$. Thus $\mu-\text{pos} \le \near(G)$. Also, just before step 3 in any iteration,  for any $v\in T$, we have $\text{pos} < v\le \mu$ and hence $|T| \le \mu-\text{pos} +c\le \near(G)+c$.
 
 Note that it is enough to store the labels of the vertices in $T$ and the choice $\gamma(v)$ made
 during the non-deterministic guess for each $v\in T$ and hence $O(|T|\log n)$
 additional bits of information needs to be stored. However, we will show that it is possible to implement
 the algorithm without explicitly remembering the vertices in $T$ and using only $O(|T|\log c)$
 additional bits in memory.  Suppose that the vertices in $T$ are  ordered as
 they
 appear in  $C_1, C_2 ,\ldots , C_r$ where vertices within a component are considered
 in the ascending order of their labels.   Let $B$ be a vector of length $\near(G)$ where each entry $B_j$ is $\log c$
 bits long which indicates the neighbour  of the $j$th vertex in $T$.  Now, we show  how to
 implement step 4 in the procedure using $B$ as a data structure for $T$.   To check if there is a
 $k\in T$ with $k=\text{pos}$, we can scan the components from $C_1, \ldots, C_i$ and check if the
 vertex assigned to $\text{pos}$ occurs in one of the components. Remember that $\gamma(k)$ is the
 successor of $k$ in the cycle cover $\gamma$. To obtain $\gamma(k)$ from $B$,
 we need to know the number $j$  of vertices $v$ that appear in components $C_1,\ldots, C_i$ such
 that $v\ge \text{pos}$ and that occur before $k$. Then $\gamma(k)=B_j$.
 Once $B_j$ is  used, we remove $B_j$ from $B$ and shift the array $B_{j+1},\ldots B_{\near(G)+c}$ by one index towards the left. 
Further, we can implement step $3$ by simply appending the information for $V(C_i)$ given by
$\gamma'$ to the right of the array $B$. We  require at most $O(c\log n)$ bits of space  guessing a cycle cover $\gamma_i$ for
component $C_i$  which can  be re-used after the non-deterministic guessing of $\gamma_i$ is complete.
Thus the overall space requirement of the algorithm is
bounded by $O(\log c\cdot (\near(G)+c) + c\log n)$.
 \end{proof}
By Proposition~\ref{lem:folklore}, we get an ABP $P$ computing a polynomial $p_G(X,Y)$  such that 
$\cperm(G)= \sum_{e_1,\ldots, e_m\in\{0,1\}} p_G(X,e)$, $m = O(c\log n)$.
Combining the above algorithm with the  closure property of algebraic branching programs over
read-once variables given by Proposition~\ref{prop:expsum},  we get a non-commutative arithmetic branching program computing $\cperm(G)$.  It can be seen that size of the resulting branching program  
is at most $ m 2^{O((c\log c+\log c\cdot\near(G))+c\log n)}=  n^{O(c)}\cdot c^{\near (G)}$ for large enough $n$.
\end{proof}
%
\begin{corollary}
\label{cor:cdet}
Let $G$ be as in Theorem~\ref{thm:ub-bounded}. There is an ABP of size $n^{O(c)}c^{\near(G)}$ computing the Cayley determinant of $G$.
\end{corollary}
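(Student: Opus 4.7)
The plan is to reuse the non-deterministic procedure $P$ from the proof of Theorem~\ref{thm:ub-bounded} essentially verbatim, introducing only the sign of the guessed cycle cover as a multiplicative coefficient. Recall that $\cdet(G)$ differs from $\cperm(G)$ only in that each monomial corresponding to a cycle cover $\gamma$ is weighted by $\text{sgn}(\gamma)$. Since $\gamma$ decomposes as a disjoint union $\gamma = \gamma_1 \uplus \cdots \uplus \gamma_r$ of cycle covers $\gamma_i$ of the strongly connected components $C_i$, and since disjoint permutations commute with respect to sign, we have $\text{sgn}(\gamma) = \prod_{i=1}^r \text{sgn}(\gamma_i)$. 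This factorization is what makes the modification cheap.

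First I would augment step~3 of the procedure as follows: immediately after non-deterministically guessing the cycle cover $\gamma'$ of $C_i$, compute $\text{sgn}(\gamma') \in \{+1, -1\}$ and set $f := f \cdot \text{sgn}(\gamma')$. Since $|V(C_i)| \le c$, the permutation $\gamma'$ is given explicitly at that moment, so its sign can be computed (for instance, by counting cycles and using $\text{sgn}(\gamma') = (-1)^{|V(C_i)| - (\text{number of cycles in }\gamma')}$) within $O(\log c)$ bits of additional work space that is immediately released. The remainder of the procedure is unchanged: whenever $\text{pos}$ matches a vertex in $T$, we multiply $f$ by $x_{k,\gamma(k)}$ and advance, so the variables are output in Cayley order while the global $\pm 1$ factor sits in the coefficient.

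The correctness argument from Claim~\ref{cla:space} carries over: each accepting path outputs exactly $\text{sgn}(\gamma) \cdot x_{1,\gamma(1)} \cdots x_{n,\gamma(n)}$ for the corresponding cycle cover $\gamma$, so $\mathsf{Acc}(G) = \cdet(G)$. The space bound is unaffected, since the sign update per component requires only $O(\log c)$ transient bits, and the non-deterministic bits are still used in read-once fashion (we make each guess for $\gamma'$ exactly once and never revisit it). Applying Proposition~\ref{lem:folklore} followed by Proposition~\ref{prop:expsum} exactly as in the proof of Theorem~\ref{thm:ub-bounded} yields an ABP of size $n^{O(c)} c^{\near(G)}$ computing $\cdet(G)$.

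There is no real obstacle here beyond verifying that attaching the sign as a scalar to an edge of the ABP preserves the read-once certification structure; this is immediate because the sign of $\gamma_i$ depends only on the non-deterministic bits read during the guess for $C_i$, so it can be absorbed into an edge label inside the block corresponding to that guess without disturbing the partition of layers demanded by Proposition~\ref{prop:expsum}.
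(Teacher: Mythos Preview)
Your proposal is correct and follows essentially the same approach as the paper: both observe that the sign of a cycle cover factors as the product of the signs of the component cycle covers, so one only needs to augment the non-deterministic procedure from Theorem~\ref{thm:ub-bounded} to output $\text{sgn}(\gamma_i)$ when processing component $C_i$, with all remaining arguments unchanged. Your write-up is in fact more detailed than the paper's own proof, which is only a brief paragraph making the same point.
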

\begin{proof}
    The argument is the same as in Theorem~\ref{thm:ub-bounded} except that now the non-deterministic
algorithm given in the proof of Theorem~\ref{thm:ub-bounded} also needs to compute the sign of the
monomial being output. Let $C_1,\ldots, C_r$ be the strongly connected components of $G$. Then the sign of the permutation corresponding to a cycle cover $\tau$ of $G$ is the product of signs of the corresponding cycle covers of $C_i$.   Thus it is enough to modify the algorithm given in the proof of Theorem~\ref{thm:ub-bounded}  to output the sign of the cycle cover chosen for $C_i$, the remaining arguments are  the same.
\end{proof}

\section{Unconditional Lower Bound}
\label{sec:lb}
We now show that any branching program computing the non-commutative permanent of directed graphs
with component size $2$ must be of exponential size. This shows that the upper bound in
Theorem~\ref{thm:ub-bounded} is tight up to a constant factor in the exponent, however, with a
different but related parameter. All our lower bound results hold for  free algebras over any field $\mathbb{K}$.

 Our proof crucially depends on Nisan's (\cite{NisanNonComm}) partial derivative technique. We begin
 with some notations following his proof.  Let $f$ be a non-commutative degree $d$ polynomial in $n$
 variables.  Let $B(f)$ denote the smallest size of a non-commutative ABP computing $f$.  For
 $k\in \{0,\ldots, d\}$ let $M_k(f)$ be the matrix with rows indexed by all possible sequences
 containing $k$ variables and columns indexed by all possible sequences containing $d-k$ variables
 (repetitions allowed).  The entry of $M_{k}(f)$ at $(x_{i_1}\ldots x_{i_k}, x_{j_1}\ldots x_{j_{d-k}})$ is
 the coefficient of the monomial $x_{i_1}\cdots x_{i_k}\cdot x_{j_1}\cdots x_{j_{d-k}}$ in
 $f$. Nisan established the following result:
\begin{theorem}\cite{NisanNonComm}
\label{thm:nisan}
For any homogeneous polynomial $f$ of degree $d$,
$$B(f)= \sum_{k=0}^d {\sf rank}(M_{k}(f)).$$
\end{theorem}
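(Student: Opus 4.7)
The plan is to establish both inequalities $B(f) \ge \sum_{k=0}^d \rank(M_k(f))$ and $B(f) \le \sum_{k=0}^d \rank(M_k(f))$ by analyzing the width of an optimal layered ABP computing $f$ at each layer. First I would homogenize: since $f$ is homogeneous of degree $d$, any ABP of minimum size computing $f$ may be assumed (without increase in size) to be layered into $d+1$ layers $V_0 = \{s\}, V_1, \ldots, V_d = \{t\}$ with every edge going from some $V_k$ to $V_{k+1}$ and labeled by a single variable (constant-labeled edges can be contracted and inhomogeneous components discarded). The size of such an ABP equals $\sum_k |V_k|$, so it suffices to show $|V_k| = \rank(M_k(f))$ in an optimum.

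For the lower bound direction, fix a layer $k$ and for each $v \in V_k$ define $p_v$ to be the sum of weights of $s \leadsto v$ paths (a homogeneous polynomial of degree $k$) and $q_v$ the sum of weights of $v \leadsto t$ paths (homogeneous of degree $d-k$). Splitting every $s\leadsto t$ path through layer $k$ yields $f = \sum_{v \in V_k} p_v\, q_v$. Writing the coefficient vector of $p_v$ as a row $\vec{p}_v \in \mathbb{K}^{n^{k}}$ and of $q_v$ as a column $\vec{q}_v^{\top} \in \mathbb{K}^{n^{d-k}}$, matching coefficients of each degree-$d$ monomial gives
\[
  M_k(f) = \sum_{v \in V_k} \vec{p}_v\, \vec{q}_v^{\top},
\]
exhibiting $M_k(f)$ as a sum of $|V_k|$ rank-one matrices, whence $\rank(M_k(f)) \le |V_k|$. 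Summing over $k$ produces the lower bound.

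For the upper bound, I would construct a layered ABP with exactly $r_k := \rank(M_k(f))$ vertices at layer $k$. Pick a set of prefixes $\alpha_1^{(k)}, \ldots, \alpha_{r_k}^{(k)}$ whose corresponding rows form a basis of the row space of $M_k(f)$, so that every row expands as a linear combination of these basis rows. The key structural observation is that the row of $M_{k+1}(f)$ indexed by $\alpha_i^{(k)} x_j$ is precisely the ``slice'' of the row of $M_k(f)$ indexed by $\alpha_i^{(k)}$ obtained by restricting to columns whose suffix begins with $x_j$; expanding this slice in the basis at layer $k+1$ produces scalars $\lambda_{i,j,i'}^{(k)}$, and I would place an edge from the $i$-th vertex of layer $k$ to the $i'$-th vertex of layer $k+1$ carrying label $\lambda_{i,j,i'}^{(k)} x_j$, summing over $j$ when the same pair of vertices receives contributions from several variables. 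The boundary cases $r_0 = r_d = 1$ (both $M_0(f)$ and $M_d(f)$ have a single row or column encoding $f$) match the singleton layers $V_0 = \{s\}$ and $V_d = \{t\}$.

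The main obstacle will be verifying that the slices cohere across layers so that the resulting ABP computes exactly $f$. This is an induction on $k$: one shows that the polynomial formed by summing weighted $s$-to-(layer-$k$-vertex-$i$) paths is precisely the degree-$k$ polynomial whose monomials are given by the basis row indexed by $\alpha_i^{(k)}$ (under the bijection between rows of $M_k(f)$ and degree-$k$ homogeneous polynomials), the inductive step being exactly the row-refinement identity linking $M_k(f)$ to $M_{k+1}(f)$ together with the change-of-basis coefficients $\lambda_{i,j,i'}^{(k)}$. Termination at $k = d$ recovers $f$ at $t$, yielding an ABP of size $\sum_k r_k$ and closing the equality.
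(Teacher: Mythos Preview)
The paper does not prove this theorem at all: it is quoted as Nisan's result and used as a black box for the lower bounds in Section~\ref{sec:lb}. So there is no ``paper's own proof'' to compare against; what you have written is a faithful reconstruction of Nisan's original argument, and the two directions you outline (rank-one decomposition at a cut for the lower bound, explicit layered construction from a row basis for the upper bound) are exactly the standard proof.

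Two small points are worth tightening. First, your homogenization step (``without increase in size'') is not automatic under the ABP model as defined in this paper, where edges carry a single variable or a single constant from $R$; Nisan works with layered ABPs whose edges carry linear forms, and the exact equality is really a statement in that model. This does not affect how the theorem is used here, since only the inequality $B(f)\ge \rank(M_\ell(f))$ is needed downstream. Second, in your upper-bound invariant the roles are transposed: with edges defined from the expansion of $\mathrm{row}(\alpha_i^{(k)} x_j)$ in the basis at layer $k{+}1$, what the construction actually maintains is that the polynomial from the $i$-th vertex of layer $k$ to $t$ equals the degree-$(d-k)$ suffix polynomial encoded by the basis row $\alpha_i^{(k)}$, not the $s$-to-vertex polynomial. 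Fixing the direction of the invariant (and normalizing at layer $d$) makes the induction go through cleanly.
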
  

We prove lower bounds for the Cayley permanent of graphs with every strongly connected component of
size exactly $2$, i.e., each strongly connected component being a two-cycle with self loops on the vertices. Note that any
collection of $n/2$ vertex disjoint two-cycles can be viewed as a permutation $\pi\in S_n$
consisting of disjoint transpositions and that $\pi$ is an involution.  Conversely, any  involution
$\pi$ on $n$ elements represents a  graph $G_\pi$ with connected component size $2$.

For a permutation $\pi\in S_n$ let the {\em cut at $i$} denoted by $C_i(\pi)$ be the set of pairs $(j,\pi(j))$  that cross $i$, i.e.,
$C_{i}(\pi)= \{(j,\pi(j))~|~  i\in[j,\pi(j)]\cup [\pi(j), j]\,\}$.
The {\em cut} parameter $\cut(\pi)$ of $\pi$ is defined as 
$\cut(\pi) = \max_{1\le k\le n} |C_k(\pi)|$.
Let $G$ be a collection of vertex disjoint 2-cycles denoted by $(a_1,b_1), \ldots, (a_{n/2},b_{n/2})$ where $n$ is even.
The corresponding involution is $\pi_G= (a_1,b_1)\cdots (a_{n/2},b_{n/2})$. By abusing the notation a bit, we let $\cut(G)=\cut(\pi_G)$.   Without loss of generality, assume that $a_i<b_i$, and $a_1<a_2<\dots <a_{n/2}$.  
Firstly, we note that $\cut(\pi)$ is bounded by $\near(G)$. 
\begin{lemma}
\label{lem:cut-near}
For any collection of disjoint 2-cycles $G$ on $n$ vertices,  $\cut(\pi)\le \near(G)$ where $\pi$ is the involution represented by $G$.
\end{lemma}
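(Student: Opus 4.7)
The plan is to unwind both definitions and reduce the inequality to a pigeonhole statement about how many pairwise vertex-disjoint transpositions can cross a fixed coordinate $k$.

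Since every strongly connected component of $G$ is a two-cycle on $\{a_i,b_i\}$ with $a_i<b_i$, the nearness simplifies to $\near(G)=\max_i(b_i-a_i)$. On the other hand, the condition $k\in[j,\pi(j)]\cup[\pi(j),j]$ in the definition of $C_k(\pi)$ is symmetric in $j$ and $\pi(j)$, so the elements of $C_k(\pi)$ correspond precisely to the transpositions $(a_i,b_i)$ with $a_i\le k\le b_i$. Writing $N_k$ for the number of such crossing transpositions, the lemma therefore reduces to showing $N_k\le\near(G)$ for every $k$.

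To prove this, I would fix $k$ and exploit the disjointness of the transpositions. The $N_k$ smaller endpoints are $N_k$ distinct integers in $\{1,\ldots,k\}$, the larger endpoints are $N_k$ distinct integers in $\{k,\ldots,n\}$, and all $2N_k$ endpoints are distinct; in particular, at most one of them equals $k$. If none does, then the smaller endpoints lie in $\{1,\ldots,k-1\}$, so the smallest such endpoint $a^*$ satisfies $a^*\le k-N_k$, its partner $b^*$ satisfies $b^*\ge k+1$, and hence $\near(G)\ge b^*-a^*\ge N_k+1$. If exactly one endpoint equals $k$ (say a smaller endpoint; the other case is symmetric), then the remaining $N_k-1$ smaller endpoints lie in $\{1,\ldots,k-1\}$, so the smallest of them is at most $k-N_k+1$, its partner is at least $k+1$, and we again obtain $\near(G)\ge N_k$.

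The whole argument is a careful pigeonhole, and I expect no real conceptual obstacle. The disjointness of the transpositions is the crucial ingredient, since it is precisely what prevents $k$ from being a shared endpoint of two crossing transpositions and forces the $\pm 1$ case analysis to go through.
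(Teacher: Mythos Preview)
Your proposal is correct and follows essentially the same pigeonhole idea as the paper's proof: both exploit that the $N_k$ crossing transpositions contribute $N_k$ distinct left endpoints at or below $k$ and $N_k$ distinct right endpoints at or above $k$, forcing some single transposition to have span at least $N_k$. Your explicit case analysis on whether $k$ itself is an endpoint makes rigorous exactly the step the paper compresses into one line; the only cosmetic gap is the degenerate case $N_k=1$ (where there are no ``remaining'' endpoints), which is trivially handled by the single crossing transposition itself.
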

\begin{proof}
Suppose $\cut(\pi)=r$, and $1\le i\le n$ be such that $|C_i(\pi)|=r$. Let $(a,b) \in G$  where  $a$
 is the least value with $a<i $ and $b>i$ be the maximum such value. Then
$b-a \ge  2 |C_i(\pi)|/2 =r$. This concludes the proof.
\end{proof}

 Further, we note  that the upper bound given in Theorem~\ref{thm:ub-bounded} holds true even if we
 consider $\cut(G)$ instead of  $\near(G)$. 
\begin{lemma}\label{lem:up-involution}
    Let $G$ be a collection of disjoint 2-cycles and self loops where every edge is labeled by a
    distinct variable or a constant from $R$. Then there is an ABP of size $2^{{O}(\cut(G))} n^2$
    computing the Cayley permanent of $G$.
\end{lemma}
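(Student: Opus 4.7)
The plan is to specialise the algorithm of Theorem~\ref{thm:ub-bounded} to the case where every strongly connected component is a $2$-cycle (so $c=2$) and to tighten the space bookkeeping so that the cost is controlled by $\cut(G)$ rather than $\near(G)$. First I would observe that in the non-deterministic procedure $P$ from the proof of Theorem~\ref{thm:ub-bounded}, the active set $T$ just after adding $V(C_i)$ in step~$3$ of iteration $i$ consists of (i) the larger endpoints $b>\text{pos}$ of $2$-cycles $(a,b)$ already processed in earlier iterations, together with (ii) at most two vertices of $C_i$ itself. Every such larger endpoint corresponds to a $2$-cycle $(a,b)$ with $a<\text{pos}\le b$, which contributes to $C_{\text{pos}}(\pi_G)$; consequently $|T|\le \cut(G)+O(1)$ throughout the execution.

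Second, I would shrink the representation of $T$. Because each component has size $2$, every active entry carries just one bit of information, namely the non-deterministic choice between the pair of self loops $x_{a,a}x_{b,b}$ and the pair of cross edges $x_{a,b}x_{b,a}$. Thus $T$ can be stored as a bit array $B$ of length at most $\cut(G)+O(1)$, indexed by the rank of the active $2$-cycle in the ascending order of its smaller endpoint. Since components are processed in this very order, step~$3$ always \emph{appends} a new bit at the right end of $B$.

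The main obstacle is implementing the deletion triggered by step~$4$, namely identifying the correct bit of $B$ when $\text{pos}$ reaches the larger endpoint $b$ of an active $2$-cycle $(a,b)$ without using more than $O(\cut(G)+\log n)$ bits of working memory. I would handle this by rereading the description of $G$ from the input tape and counting, in an $O(\log n)$-bit counter, the number of $2$-cycles $(a',b')$ with $a'<a$ and $b'\ge \text{pos}$, which are exactly the currently active $2$-cycles whose smaller endpoint precedes $a$. The resulting value is the rank of the entry to delete; after reading the corresponding bit of $B$ and emitting the appropriate variable (either $x_{b,b}$ or $x_{b,a}$), the entries to its right are shifted one position to the left in place.

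Putting the pieces together, the procedure runs in space $O(\cut(G)+\log n)$, accesses its non-deterministic bits in a read-once fashion (at most one bit per component, hence $O(n)$ bits in total), and outputs a cycle cover of $G$ in Cayley order on each accepting path. By Proposition~\ref{lem:folklore} and Proposition~\ref{prop:expsum} we obtain a non-commutative ABP computing $\cperm(G)$ of size $2^{O(\cut(G))}\cdot n^{O(1)}$, and a careful accounting of the $O(\log n)$-bit counters (position, scanning pointer, and rank) yields the claimed bound $2^{O(\cut(G))}n^2$.
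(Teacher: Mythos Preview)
Your proposal is correct and follows essentially the same approach as the paper: reuse the non-deterministic procedure from Theorem~\ref{thm:ub-bounded} and re-argue the space bound, observing that at any moment the active set $T$ consists of those $2$-cycles $(a,b)$ with $a<\text{pos}\le b$, whose number is bounded by $\cut(G)$. The paper's proof is in fact terser than yours---it simply notes that only the outgoing edge of each pending $b_i$ must be stored and then defers to Theorem~\ref{thm:ub-bounded} for the remaining details---so your explicit treatment of the bit array and the rank-finding deletion step is a faithful expansion rather than a different argument.
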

\begin{proof}
    The algorithm is the same as in Theorem~\ref{thm:ub-bounded}. We only need to argue the space bound
as in the claim in Theorem~\ref{thm:ub-bounded}.   First note that either $a_i=i$, or $i$ has already occurred in one of the  involutions $(a_a,b_1),\ldots, (a_{i-1}, b_{i-1})$. When the algorithm processes the component corresponding to the involution $(a_i, b_i)$, it needs to remember the outgoing edge chosen for $b_i$ (either the self loop or the edge $b_i\to a_i$).  Thus at any stage, the number of edges that needs to be stored is bounded by $t=\max_k~|C_k(\sigma)|$. The rest of the arguments are exactly the same as in Theorem~\ref{thm:ub-bounded}. 
\end{proof}

\begin{lemma}\label{lem:tensor-lowerbound}
    Let $G$ be a collection of $\ell$ disjoint two-cycles described by the involution $\pi$ and self loops
    at every vertex with edge labeled by distinct variables.  Then  $M_\ell(\cperm(G))$ contains $I_{2}^{\otimes t}$ as a sub-matrix
    where $t=\max_{k} |C_{k}(\pi)|$, $A^{\otimes t}$ is the tensor product of $A$ with itself $t$ times and $I_2$ is the $2\times 2$ identity matrix.
\end{lemma}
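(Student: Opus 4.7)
The plan is to exhibit an explicit $2^t\times 2^t$ identity submatrix of $M_\ell(\cperm(G))$; since $I_{2^t}$ and $I_2^{\otimes t}$ coincide up to a permutation of rows and columns, this suffices. Let $k^*$ denote the split position chosen so that $|C_{k^*}(\pi)|=t$, and let $(a_{j_1},b_{j_1}),\ldots,(a_{j_t},b_{j_t})$ be the $t$ two-cycles crossing this position (that is, $a_{j_r}\le k^*<b_{j_r}$).

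First I would parameterize the cycle covers of $G$. Because every component is a two-cycle with self loops at both vertices, a cycle cover of $G$ independently chooses, for each two-cycle $(a_i,b_i)$, either the two self-loops (contributing $x_{a_i,a_i}$ at position $a_i$ and $x_{b_i,b_i}$ at position $b_i$ of the Cayley monomial) or the two swap edges (contributing $x_{a_i,b_i}$ and $x_{b_i,a_i}$). Consequently $\cperm(G)=\sum_{S\subseteq[\ell]} m_S$ is a sum of $2^\ell$ distinct degree-$n$ multilinear monomials, each indexed by a ``swap set'' $S\subseteq[\ell]$.

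Next, I would freeze every non-crossing two-cycle to its self-loop configuration; this leaves a family of $2^t$ cycle covers indexed by subsets $T\subseteq\{j_1,\ldots,j_t\}$. Writing $L_T$ for the length-$k^*$ prefix and $R_T$ for the length-$(n-k^*)$ suffix of the Cayley monomial associated with $T$, I consider the submatrix of $M_\ell(\cperm(G))$ whose rows are indexed by $\{L_T\}_T$ and columns by $\{R_T\}_T$. The $(T_1,T_2)$ entry equals the coefficient of the concatenation $L_{T_1}R_{T_2}$ in $\cperm(G)$. A crossing cycle $(a_{j_r},b_{j_r})$ contributes variables \emph{both} in the prefix (at position $a_{j_r}\le k^*$) and in the suffix (at position $b_{j_r}>k^*$), with its self-loop/swap choice reflected in both; the monomial $L_{T_1}R_{T_2}$ thus corresponds to a valid cycle cover iff $T_1$ and $T_2$ agree on every crossing cycle, i.e., iff $T_1=T_2$. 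This yields the identity matrix $I_{2^t}$.

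The main obstacle is the uniqueness argument: ruling out any other cycle cover whose Cayley monomial happens to split as $L_{T_1}R_{T_2}$ for some $T_1\ne T_2$. Here the hypothesis that all edge labels are distinct variables is essential: each variable $x_{a,b}$ appears in the Cayley monomial of a cover iff that cover uses the edge $(a,b)$, so $L_{T_1}$ and $R_{T_2}$ together pin down every edge of any cover that produces this monomial. The non-crossing cycles appear verbatim on only one side of the split (so their configuration is forced to the fixed self-loop choice), while the crossing ones simultaneously constrain both sides, forcing $T_1=T_2$ and completing the identification of the submatrix with $I_2^{\otimes t}$.
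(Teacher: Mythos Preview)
There is a genuine mismatch in your argument: you take $L_T$ to be the length-$k^*$ prefix and $R_T$ the length-$(n-k^*)$ suffix, and then assert this gives a submatrix of $M_\ell(\cperm(G))$. But rows of $M_\ell$ are indexed by sequences of length exactly $\ell$, and nothing forces the maximizing cut position $k^*$ to equal $\ell=n/2$. Take $\pi=(1,2)(3,6)(4,5)$: here $\ell=3$ and $t=2$, but the maximum $|C_k|$ is attained only at $k^*\in\{4,5\}$, so your $L_T$'s have length $4$ (or $5$) and are not row labels of $M_3$ at all.

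In fact the lemma as literally stated fails on this example: one checks directly that $\rank(M_3(\cperm(G)))=2<4=2^t$, so $M_\ell$ cannot contain $I_2^{\otimes t}$. What your argument \emph{does} establish---and correctly---is that $M_{k^*}(\cperm(G))$ contains $I_{2^t}$, and this is exactly what Theorem~\ref{thm:cut-lowerbound} actually needs, since Nisan's bound sums $\rank(M_k)$ over all $k$. The paper's own proof proceeds differently, by restricting to the subgraph $G_k$ of crossing two-cycles and arguing by induction on $|C_k|$ that the partial-derivative matrix acquires a tensor factor $I_2$ each time one crossing cycle is peeled off; it then asserts that $M_m(\cperm(G_k))$ embeds as a submatrix of $M_\ell(\cperm(G))$, a step that suffers from the same index discrepancy. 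So your direct combinatorial description and the paper's inductive one are equivalent in content, and both really prove the rank bound at the split position $k^*$ rather than at $\ell$.
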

\begin{proof}
    Let $k\in [\ell]$, and $m=|C_k(\pi)|\le \ell$. Let
    $C_k(\pi) = \{(a_{i_1}, b_{i_1}), \ldots, (a_{i_m}, b_{i_m})\}$ be such that
    $a_{i_j}\le k \le b_{i_j}$ for all $j$.  Let $G_k$ be the graph restricted to involutions in
    $C_k(\pi)$. By induction on $m$, we argue that $M_m(\cperm(G_k))$ contains
    $I_{2}^{\otimes m}$ as a sub-matrix. The lemma
    would then follow since $M_m(\cperm(G_k))$ is itself a sub-matrix of
    $M_\ell(\cperm(G))$.

    We begin with $m=1$ as the base case.  Consider the transposition $(a_{i_j}, b_{i_j})$, with
    $a_{i_j}\le k\le b_{i_j}$. The corresponding two cycle has four edges. Let $f_{i_j}$ be the
    Cayley permanent of this graph then $M_1(f_{i_j})$ has the $2\times 2$ identity matrix as a
    sub-matrix.  Let us dwell on this simple part. For ease of notation let the variables
    corresponding to the self loops be given by $x_{a},x_{b}$ for $(a_{i_{j}},a_{i_{j}})$ and
    $(b_{i_{j}},b_{i_{j}})$ respectively and the edge $(a_{i_{j}},b_{i_{j}})$ by $x_{(a,b)}$ and the
    edge $(b_{i_{j}},a_{i_{j}})$ by $x_{b,a}$. Now our matrix has monomials $x_{a},x_{a,b}$ as rows
    and $x_{b},x_{b,a}$ as columns. We can ignore the other orderings as these will always be
    zero. As the valid cycle covers are given by $x_{a}x_{b}$ and $x_{a,b}x_{b,a}$ the proof is
    clear.
 
    For the induction step, suppose $m > 1$.  Suppose $a_1 < a_2 < \dots < a_m$. Let $G_k'$ be the
    graph induced by $C_k(\pi)\setminus (a_1,b_1)$. Let
    $M'=M_{m-1}(\cperm(G_k'))$. The rows of $M'$ are labeled by monomials
    consisting of variables with first index $\le k$ and the columns of $M'$ are labeled by
    monomials consisting only of variables with first index $>k$.  Let
    $M=M_{m}(\cperm(G_k))$. $M$ can be obtained from $M'$ as follows: Make two
    copies of the row labels of $M'$, the first one with monomials pre-multiplied by $x_{a_1,a_1}$,
    and the second pre-multiplied by $x_{a_1,b_1}$. Similarly, make two copies of the columns of
    $M'$, the first by inserting $x_{b_1,b_1}$ to the column labels of $M'$ at appropriate position,
    and then inserting $x_{b_1,a_1}$ similarly. Now, the matrix $M$ can be viewed as
    two copies of $M'$ that are placed along the diagonal. 
     Thus  
    $M= M'\otimes I_2$, combining this with  Induction Hypothesis completes the proof.   
\end{proof}
\begin{remark}
It should be noted that the ordering of the variables is crucial in the above argument. If $a_{1},b_{1}<k$ in the   above,   then ${\sf rank}(M)={\sf rank}(M')$.   
\end{remark}

\begin{theorem}\label{thm:cut-lowerbound}
   Let $G$  be a  collection of disjoint two cycles described by the involution $\pi$ and self loops at every vertex, with edges labeled by distinct variables. 
   Then any  non-commutative ABP computing the Cayley permanent on  $G$ has size at least $2^{\Omega(\cut(G))}$. 
\end{theorem}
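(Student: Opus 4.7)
The plan is to combine Nisan's rank lower bound (Theorem~\ref{thm:nisan}) with the tensor structure established in Lemma~\ref{lem:tensor-lowerbound}. Since $G$ consists of $\ell = n/2$ disjoint two-cycles with self loops at every vertex, every cycle cover of $G$ uses exactly one outgoing edge per vertex, so $\cperm(G)$ is a homogeneous polynomial of degree $n$ in the edge variables. Thus Nisan's formula applies.

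First I would fix the partition degree $k=\ell$ (the ``midpoint'' used in Lemma~\ref{lem:tensor-lowerbound}). By Theorem~\ref{thm:nisan},
\begin{equation*}
B(\cperm(G)) \;=\; \sum_{k=0}^{n} \rank(M_k(\cperm(G))) \;\ge\; \rank(M_\ell(\cperm(G))).
\end{equation*}
Next I would appeal directly to Lemma~\ref{lem:tensor-lowerbound}, which guarantees that $M_\ell(\cperm(G))$ contains $I_2^{\otimes t}$ as a submatrix, where $t = \cut(G) = \max_k |C_k(\pi)|$. Since $I_2^{\otimes t}$ is the $2^t \times 2^t$ identity matrix, it has full rank $2^t$, and the rank of any matrix is at least the rank of any submatrix. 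Hence
\begin{equation*}
\rank(M_\ell(\cperm(G))) \;\ge\; 2^{\cut(G)}.
\end{equation*}
Chaining the two inequalities gives $B(\cperm(G)) \ge 2^{\cut(G)} = 2^{\Omega(\cut(G))}$, which is the required bound.

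There is essentially no technical obstacle left at this stage: the combinatorics of the cycle covers and the Cayley ordering were already absorbed into the inductive construction of the tensor-product submatrix in Lemma~\ref{lem:tensor-lowerbound}, and Nisan's theorem converts any single rank bound into an ABP size lower bound. The only point worth double-checking is that $\cperm(G)$ is genuinely homogeneous (so that Theorem~\ref{thm:nisan} applies as stated); this is immediate because every cycle cover of a graph on $n$ vertices picks exactly $n$ edges, one per vertex.
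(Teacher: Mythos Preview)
Your proposal is correct and follows the same overall strategy as the paper: invoke Lemma~\ref{lem:tensor-lowerbound} to exhibit a large identity submatrix inside one of Nisan's partial-derivative matrices, then read off the ABP lower bound from Theorem~\ref{thm:nisan}.

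The only difference is the level at which Lemma~\ref{lem:tensor-lowerbound} is applied. You apply it directly to the full graph $G$ with $\ell=n/2$ two-cycles, obtaining $I_2^{\otimes \cut(G)}$ inside $M_{n/2}(\cperm(G))$ in one step. The paper instead fixes a position $k$ realizing the maximum cut, restricts to the subgraph $G'$ induced by the $\ell=|C_k(\pi)|$ transpositions crossing $k$, applies the lemma to $G'$, and then argues via a substitution that $M_\ell(\cperm(G'))$ sits inside $M_\ell(\cperm(G))$. Your route is shorter and avoids that extra substitution step, which is legitimate precisely because Lemma~\ref{lem:tensor-lowerbound} (as stated) already absorbs the restriction-to-$G_k$ argument internally; the paper's proof effectively re-derives part of the lemma. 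Both arrive at $\rank(M_\ell(\cperm(G)))\ge 2^{\cut(G)}$, just for different values of $\ell$.
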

\begin{proof}
   It is enough to argue that for every $k$, there is an $\ell$ with
    ${\sf rank}(M_\ell(f))\ge 2^{\Omega(|C_k(\pi)|)}$, then the claim follows from
    Theorem~\ref{thm:nisan} as the lower bound is given by the sum over all $\ell$.  Let
    $\ell = |C_{k}(\pi)|$, and suppose $(a_{i_1}, b_{i_1}),\ldots,(a_{i_\ell}, b_{i_\ell})$ are the
    transpositions crossing $k$. Let $G'$ be the sub-graph of $G$ induced by the vertices corresponding
    to the transpositions above.  Let $f= \cperm(G') $.  Applying Lemma~\ref{lem:tensor-lowerbound} on
    $G'$ we conclude that $M_{\ell}(f)$ has $I_{2}^{\otimes |C_{k}(\pi)|}$ as a sub-matrix, i.e., the
    identity matrix of dimension $2^{|C_{k}(\pi)|}\times 2^{|C_{k}(\pi)|}$.  Note that $ f$ can be
    obtained by setting weights of the self loops of vertices not in $G'$ to zero, and setting the
    remaining variables to $1$.  Moreover, the matrix $M_{\ell}(f)$ is a sub matrix of
    $M_{\ell}(\cperm(G))$ obtained by relabeling the rows and columns as per the substitution mentioned
    above, and removing rows and columns that are zero. We conclude
    $\rank(M_{\ell}(\cperm(G))) \ge\rank(M_{\ell}(f))\ge 2^{|C_k(\pi)|}$.
\end{proof}

Let $\pi = (a_1,b_1) \cdots (a_{n/2},b_{n/2})$, $a_1<a_2< \dots<a_{n/2}$
be an involution.  Then  $G_\pi$   is the set of $2$-cycles
$(a_1,b_1), \dots, (a_{n/2},b_{n/2})$ and self loops at every vertex.
\begin{corollary}\label{cor:explicit-perm}
    Let $G$ be a collection of disjoint two cycles described by the involution $\pi$ and self loops
    at every vertex, with edges labeled by distinct variables.  Then
    $B(\cperm(G))\in 2^{\Theta(\cut(G))}$.  Further, there exists a graph $G$ with
    $\cut(G)=\Theta(n)$.
\end{corollary}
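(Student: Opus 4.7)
The corollary has two parts, and the plan is to handle each separately; both are essentially bookkeeping once the surrounding machinery is in place.

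For the first part, I simply combine the matching bounds already established: Lemma~\ref{lem:up-involution} furnishes an ABP of size $2^{O(\cut(G))}\cdot n^{2}$, and Theorem~\ref{thm:cut-lowerbound} gives that any ABP computing $\cperm(G)$ has size at least $2^{\Omega(\cut(G))}$. To express this as $B(\cperm(G))\in 2^{\Theta(\cut(G))}$, I observe that only the regime $\cut(G)=\Omega(\log n)$ is of interest---otherwise the polynomial factor $n^{2}=2^{O(\log n)}$ is absorbed into the $2^{O(\cut(G))}$ by adjusting the constant in the exponent, and the bounds match up to a constant factor in the exponent.

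For the second part, I need to exhibit an explicit involution $\pi$ on $[n]$ (with $n$ even) whose cut parameter is linear. The natural candidate is the ``nested'' pairing
\[
\pi \;=\; (1,\,n/2+1)\,(2,\,n/2+2)\,\cdots\,(n/2,\,n),
\]
so that every transposition spans the midpoint. Cutting at $k=n/2$, every pair $(i,\,n/2+i)$ with $1\le i\le n/2$ satisfies $i\le n/2\le n/2+i$ and therefore belongs to $C_{n/2}(\pi)$. Hence $|C_{n/2}(\pi)|=n/2$ and $\cut(\pi)\ge n/2$; the trivial bound $\cut(\pi)\le n/2$ (there are only $n/2$ transpositions) yields $\cut(G_{\pi})=n/2=\Theta(n)$. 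Combining this with the first part gives an explicit family of collections of disjoint two-cycles whose Cayley permanent requires ABPs of size $2^{\Omega(n)}$.

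There is no real obstacle here; the only item worth being careful about is the mild mismatch between the upper bound $2^{O(\cut(G))}n^{2}$ and the desired form $2^{\Theta(\cut(G))}$, which is resolved by the observation above. The explicit pairing $\pi$ is essentially forced by the definition of $\cut$, so no cleverness is needed beyond writing down the nested involution and checking the midpoint cut.
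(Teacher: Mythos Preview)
Your proposal is correct and follows essentially the same route as the paper: you invoke Lemma~\ref{lem:up-involution} and Theorem~\ref{thm:cut-lowerbound} for the first part, and for the second part you use the very same involution $\pi(i)=n/2+i$ and check the cut at $k=n/2$. If anything, you are slightly more careful than the paper in flagging that the $n^{2}$ factor from the upper bound must be absorbed into $2^{O(\cut(G))}$, which indeed requires $\cut(G)=\Omega(\log n)$; the paper simply asserts the first part ``follows immediately.'' (A minor quibble: the pairing $(i,\,n/2+i)$ is better described as ``shifted'' or ``parallel'' rather than ``nested,'' but this is purely terminological and your computation is correct.)
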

\begin{proof}
    The first part follows immediately from Lemma~\ref{lem:up-involution} anf
    Theorem~\ref{thm:cut-lowerbound}. For the second statement, consider the  involution $\pi$ with 
    $\pi(i) = n/2+i$. It can be seen that $\max_k |C_k(\pi)| = n/2$. Thus by
    Theorem~\ref{thm:cut-lowerbound} we have any ABP computing the Cayley permanent of $G$ is of
    size $2^{\Omega(n)}$. Since $\cut(\pi)\le n$ for any graph, by Lemma~\ref{lem:up-involution} the
    result follows.
\end{proof}


Finally, we have, 

\begin{theorem}
\label{thm:cut-average}
For all   but a $1/\sqrt{n}$ fraction of  graphs $G$ with connected component size 2, any ABP computing  the \cperm\ on $G$ requires size $2^{\Omega(n)}$.

\end{theorem}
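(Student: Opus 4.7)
The plan is to reduce the statement to Theorem~\ref{thm:cut-lowerbound} by a second moment argument. Since $G_\pi$ is determined by the involution $\pi$, and a graph with all connected components of size exactly $2$ corresponds to a fixed-point-free involution on $[n]$ (with $n$ even), it suffices to show that under the uniform distribution on such involutions $\pi$, we have $\Pr[\cut(\pi) \ge cn] \ge 1 - 1/\sqrt{n}$ for some constant $c > 0$. Combined with Theorem~\ref{thm:cut-lowerbound}, this immediately yields the $2^{\Omega(n)}$ lower bound for all but a $1/\sqrt{n}$ fraction of graphs.

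To get a large cut, I would just look at a single well-chosen location. Fix $k = n/2$ and let $L = \{1,\dots,n/2\}$, $R = \{n/2+1,\dots,n\}$. For each $v \in L$ let $X_v$ be the indicator of the event $\pi(v) \in R$. Every such $v$ contributes a transposition that crosses $k$, so $|C_k(\pi)| \ge X := \sum_{v \in L} X_v$. A direct count over fixed-point-free involutions gives
\begin{align*}
\Pr[\pi(v) \in R] &= \frac{n/2}{n-1}, \quad\text{and for } u \ne v \in L,\\
\Pr[\pi(u) \in R,\ \pi(v) \in R] &= \frac{(n/2)(n/2-1)}{(n-1)(n-3)},
\end{align*}
so $E[X] = (n/2)\cdot(n/2)/(n-1) = n/4 + O(1)$, and a short algebraic simplification shows that the covariance $\operatorname{Cov}(X_u, X_v) = O(1/n^2)$ for each pair $u\ne v$, hence $\operatorname{Var}(X) = O(n)$.

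Chebyshev's inequality then gives $\Pr[X \le n/8] \le O(n)/(n/8)^2 = O(1/n)$, so for all but an $O(1/n) \le 1/\sqrt{n}$ fraction of involutions $\pi$ we have $\cut(\pi) \ge |C_{n/2}(\pi)| \ge n/8 = \Omega(n)$. Plugging into Theorem~\ref{thm:cut-lowerbound}, the Cayley permanent of the corresponding graph $G_\pi$ requires an ABP of size $2^{\Omega(n)}$.

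The only technical point that needs care is the covariance estimate in the second moment bound: the joint probability $\Pr[\pi(u)\in R,\pi(v)\in R]$ differs from the product of marginals only by $O(1/n^2)$, which keeps $\operatorname{Var}(X)$ linear in $n$ rather than quadratic; without this cancellation Chebyshev would not be strong enough to beat the $1/\sqrt{n}$ threshold demanded by the statement. This is the one calculation I would carry out in full; the rest is essentially bookkeeping and an appeal to Theorem~\ref{thm:cut-lowerbound}.
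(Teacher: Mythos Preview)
Your argument is correct, and in fact more elementary than the paper's. The paper does not analyze a single cut directly; instead it associates to each involution $\pi$ the interval graph $H_\pi$ on the intervals $[i,\pi(i)]$, proves the averaging bound $\cut(\pi)\ge (\text{edges of }H_\pi)/n$, and then invokes Scheinerman's theorem that a random interval graph has $\Omega(n^2)$ edges with probability $1-1/\sqrt n$. From this it deduces $\cut(\pi)=\Omega(n)$ with the stated probability and applies Theorem~\ref{thm:cut-lowerbound}. Your approach bypasses both the interval-graph detour and the external reference: a second-moment computation at the midpoint $k=n/2$ already yields $|C_{n/2}(\pi)|\ge n/8$ except with probability $O(1/n)$, which is stronger than the $1/\sqrt n$ the statement asks for. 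The paper's route has the advantage of reusing known machinery, while yours is self-contained and gives a sharper failure probability; both feed into Theorem~\ref{thm:cut-lowerbound} in the same way.
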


As before, let $n=2m$ be even. Then an involution $\pi$ on $\{1,\ldots, n\}$ with $\pi(i)\neq i$
represents a collection of $m$  intervals
\[I_\pi=\{[i,\pi(i)]~|~ 1\le i\le n,~ i<\pi(i)\}.
\]
Let $H_\pi$ be the interval graph formed by the intervals in $I_\pi$.   
\begin{lemma} 
\label{lem:involution}
Let $\pi$ be an involution and $H_\pi$ be the interval graph as defined above. Then $\cut(\pi)\ge l/n$ where $l$ is the number of edges in $H_\pi$.
\end{lemma}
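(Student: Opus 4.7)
The plan is to give a double counting argument that bounds the number of edges in $H_\pi$ by the maximum load of intervals at any integer. Since $\pi$ has no fixed points, each index $i\in\{1,\ldots,n\}$ satisfies $i\neq\pi(i)$, so the intervals in $I_\pi$ are in bijection with indices $i$ with $i<\pi(i)$, and in particular distinct intervals have distinct left endpoints. For each $k\in\{1,\ldots,n\}$, let $p_k$ denote the number of intervals in $I_\pi$ containing $k$. The first observation is that $|C_k(\pi)|=2p_k$: every interval $[i,\pi(i)]\in I_\pi$ containing $k$ contributes exactly two ordered pairs to $C_k(\pi)$, namely $(i,\pi(i))$ from $j=i$ and $(\pi(i),i)$ from $j=\pi(i)$, and every element of $C_k(\pi)$ arises in this way. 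Hence $\cut(\pi)=2\max_k p_k$.

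The key step is to assign each edge of $H_\pi$ to a canonical integer. Consider any edge $\{I_1,I_2\}$ of $H_\pi$ with $I_1=[a_1,b_1]$ and $I_2=[a_2,b_2]$, and assume $a_1<a_2$ (left endpoints are distinct). Since $I_1\cap I_2\neq\emptyset$ we must have $a_2\leq b_1$, so $a_2\in I_1$. Assign this edge to $k=a_2$. For a fixed $k$, an edge assigned to $k$ requires that $k$ be the left endpoint of some interval $I_2\in I_\pi$ and that $I_1$ be some other interval containing $k$. There are at most $p_k-1$ such choices for $I_1$, so at most $p_k-1$ edges are assigned to $k$ (and none when $k$ is not a left endpoint of any interval).

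Summing over all $k$ gives $l\leq\sum_{k=1}^n p_k\leq n\cdot\max_k p_k$. Combined with the identity $\cut(\pi)=2\max_k p_k$, this yields $\cut(\pi)\geq 2l/n\geq l/n$, proving the claim. I expect no real obstacle; the only bookkeeping subtlety is the factor of two between $|C_k(\pi)|$ and $p_k$, coming from each 2-cycle being witnessed by the two ordered pairs $(i,\pi(i))$ and $(\pi(i),i)$ in the definition of $C_k(\pi)$.
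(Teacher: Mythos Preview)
Your argument is correct and is essentially the same charging/double-counting argument as the paper's: assign each edge $\{I_1,I_2\}$ of $H_\pi$ to the larger of the two left endpoints, observe that the other interval covers that point, and bound $l$ by $\sum_k |C_k(\pi)|\le n\cdot\cut(\pi)$. Your write-up is more explicit (you isolate $p_k$ and the identity $|C_k(\pi)|=2p_k$) and thereby obtains the slightly sharper inequality $\cut(\pi)\ge 2l/n$, but the underlying idea is identical to the paper's proof.
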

\begin{proof}
For every edge $(a,b)$ in $H_\pi$, the corresponding intervals $I_a=[i,\pi(i)]$ and $I_b=[j,\pi(j)]$
have non empty intersection. Suppose $i<j$, then $j\in [i,\pi(i)]$. (In the case when $j>\pi(j)$, we
have $\pi(j)\in [i,\pi(i)]$. Other cases can be handled analogously.) Thus every edge in $H_\pi$
contributes at least one distinct interval $[i,\pi(i)]$ with $i\le k\le \pi(i)$, i.e., it
contributes a value to $C_k(\pi)$. Then $l\le \sum_{k}|C_k(\pi)|$. This concludes the
proof.
\end{proof}

Scheinerman~\cite{Sch88} showed that, random interval graphs have $\Omega(n^2)$ edges with high probability, i.e.,
\begin{theorem}\cite{Sch88}
Let $H_\pi$ be an interval graph where $\pi$ is an involution on $[n]$ chosen uniformly at random. Then $H_\pi$ has at least $n^2/3-n^{7/4}$ edges  with probability at least $1-1/\sqrt{n}$.    
\end{theorem}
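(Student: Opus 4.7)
The plan is to apply the second moment method to the random variable $L := |E(H_\pi)|$. Writing $L = \sum_{i<j} X_{ij}$, where $X_{ij}$ is the indicator that intervals $I_i$ and $I_j$ in $I_\pi$ overlap, I aim for a lower-tail concentration bound of the form $\Pr[L < n^2/3 - n^{7/4}] \le 1/\sqrt{n}$.

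For the expectation, I would use a classical four-point symmetry argument: for any four distinct positions on the line there are exactly three ways to pair them into two intervals --- the disjoint pairing $\{1,2\}\{3,4\}$, the nested pairing $\{1,4\}\{2,3\}$, and the crossing pairing $\{1,3\}\{2,4\}$ --- of which only the first yields non-intersecting intervals. By exchangeability of endpoints under a uniformly random involution, $\Pr[X_{ij}=1] = 2/3$ conditional on both $i$ and $j$ lying in $2$-cycles. A uniform involution on $[n]$ has $n/2 - O(\sqrt{n})$ two-cycles in expectation (the number of fixed points concentrates around $\sqrt{n}$), and a direct count then gives $\mathbb{E}[L] \ge n^2/3 - O(n^{3/2})$.

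For the variance, I decompose $\mathrm{Var}(L) = \sum \mathrm{Cov}(X_{ij}, X_{kl})$ according to whether $\{i,j\}$ and $\{k,l\}$ coincide, share one element, or are disjoint. The coinciding and one-overlap contributions total $O(n^3)$. For disjoint pairs, the eight endpoints involved are exchangeable but globally constrained to lie in an involution, so a joint calculation on the matchings of eight labeled points yields $\mathrm{Cov}(X_{ij}, X_{kl}) = O(1/n)$; summing over $O(n^4)$ disjoint index-pairs contributes another $O(n^3)$. Hence $\mathrm{Var}(L) = O(n^3)$, and Chebyshev's inequality at threshold $t = n^{7/4}$ gives
\[
\Pr\!\left[L < n^2/3 - n^{7/4}\right] \le \frac{\mathrm{Var}(L)}{n^{7/2}} = O\!\left(n^{-1/2}\right),
\]
completing the proof.

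The main obstacle is precisely the $O(1/n)$ covariance bound for disjoint pairs, where the global involution constraint introduces a mild correlation that must be quantified rather than dismissed. A clean alternative I would fall back on is McDiarmid's bounded-differences inequality: a single transposition in $\pi$ (swapping $\pi(i)$ and $\pi(j)$) alters $L$ by at most $O(n)$, so the associated exposure martingale is $O(n)$-Lipschitz and Azuma--Hoeffding yields sub-Gaussian concentration at scale $O(n^{3/2}\sqrt{\log n})$, comfortably below the required $n^{7/4}$, and moreover an even stronger failure probability than $1/\sqrt{n}$.
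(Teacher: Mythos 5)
First, a point of comparison: the paper does not prove this statement at all --- it is imported with a citation from Scheinerman's paper on random interval graphs --- so there is no in-paper proof to measure yours against. Your overall strategy is the standard one and its concentration half is sound. Decomposing $L$ over $4$-subsets of endpoints (three pairings, two of which intersect) gives $\mathrm{Var}(L)=O(n^3)$: for disjoint index pairs the covariance you worry about is in fact exactly $0$ in the equivalent i.i.d.-endpoints formulation of the model (each $X_{ij}$ depends only on the relative order of its own four endpoints), or $O(1/n^5)$ per term in the uniform-matching formulation, so the disjoint pairs contribute $O(n^3)$ and the overlapping pairs another $O(n^3)$. Chebyshev at $t=n^{7/4}$ then yields $O(n^{-1/2})$, and your Azuma/McDiarmid fallback with the $O(n)$ swap-Lipschitz bound gives the much stronger $\exp(-\Omega(\sqrt{n}))$. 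Either route finishes the proof once the mean is right.

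The genuine gap is the expectation step, and it is arithmetic. With $k = n/2 - O(\sqrt{n})$ intervals, each unordered pair intersecting with probability $2/3$, a direct count gives $\mathbb{E}[L] = \tfrac{2}{3}\binom{k}{2} = n^2/12 - O(n^{3/2})$, not $n^2/3 - O(n^{3/2})$; you are off by a factor of four, and no concentration argument can push $L$ above its mean. Indeed, under the literal reading of the statement (an involution on $[n]$, hence at most $n/2$ intervals), $H_\pi$ has at most $\binom{n/2}{2} < n^2/8$ edges, so the claimed bound $n^2/3 - n^{7/4}$ is unattainable for large $n$. The theorem is correct only in Scheinerman's normalization, where $n$ counts the \emph{intervals}, i.e., the (fixed-point-free) involution acts on $2n$ endpoints; then $\mathbb{E}[L] = \tfrac{2}{3}\binom{n}{2} = n^2/3 - n/3$ and your argument closes with room to spare. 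You should have flagged this mismatch rather than asserting a ``direct count'' that does not compute; note also that the paper's downstream use (Corollary~\ref{cor:random-involution}) only needs $|E(H_\pi)| = \Omega(n^2)$, so the constant is immaterial there.
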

\begin{corollary}
\label{cor:random-involution}
For an involution $\pi$ on $[n]$ chosen uniformly at random, we have $\cut(\pi)=\Omega(n)$ with probability $1-1/\sqrt{n}$.
\end{corollary}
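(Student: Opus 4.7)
The plan is to chain the two results stated immediately before the corollary, with essentially no additional work. Lemma~\ref{lem:involution} gives a deterministic inequality translating the cut parameter into an edge count: for any involution $\pi$ on $[n]$, $\cut(\pi) \ge l/n$, where $l$ is the number of edges of the interval graph $H_\pi$. Scheinerman's theorem supplies the probabilistic component: if $\pi$ is drawn uniformly at random from the involutions on $[n]$, then $H_\pi$ has at least $n^2/3 - n^{7/4}$ edges with probability at least $1 - 1/\sqrt n$.

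Combining these on the same high-probability event yields
\[
\cut(\pi) \;\ge\; \frac{l}{n} \;\ge\; \frac{n^2/3 - n^{7/4}}{n} \;=\; \frac{n}{3} - n^{3/4} \;=\; \Omega(n),
\]
which is exactly the statement of the corollary. The failure probability is inherited verbatim from Scheinerman's theorem, so no union bound or auxiliary concentration inequality is required.

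There is no genuine obstacle here; the corollary is structured as a direct consequence of two earlier results. The only point worth double-checking is the consistency of the random model: Scheinerman's theorem, as invoked, must be about $H_\pi$ for $\pi$ a uniformly random involution on $[n]$ (so that ``random interval graph'' means the interval graph coming from the intervals $[i,\pi(i)]$ with $i<\pi(i)$), which is precisely the family to which Lemma~\ref{lem:involution} applies. Once that is noted, the proof reduces to the one-line computation above.
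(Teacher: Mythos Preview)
Your proposal is correct and matches the paper's intended argument: the corollary is stated without proof precisely because it follows immediately by plugging Scheinerman's edge-count bound into Lemma~\ref{lem:involution}, exactly as you do. Your one-line computation $\cut(\pi)\ge l/n\ge n/3-n^{3/4}=\Omega(n)$ is the whole content, and the paper treats it as such.
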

Theorem~\ref{thm:cut-average} now follows.

\section{\sharpp completeness}
\label{sec:complete}
In this section, we show multiple  hardness results for simple polynomials over certain classes of non-commutative algebras.
We give a $\sharpp$ completeness result for specific graphs of component size at most six.  The
completeness result is obtained by a careful analysis of the parameters in the reduction from
$\#SAT$ to non-commutative determinant given recently by Gentry~\cite{Gentry14} and the small
modification we will do to make this proof work for the Cayley Permanent. 

\begin{theorem}
\label{thm:gentry}
Let $R$ be a division algebra over a field $\mathbb{K}$ of characteristic zero containing the algebra of
$2\times 2$ matrices over $\mathbb{K}$.
Computing the Cayley Permanent on graphs with component
size 6 with edges labeled from $R$ is $\sharpp$ complete.
\end{theorem}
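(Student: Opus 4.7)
The plan is to follow Gentry's reduction~\cite{Gentry14} from $\sharpsat$ to the Cayley determinant over division algebras containing the algebra of $2\times 2$ matrices over $\mathbb{K}$, and adapt it in two ways: verify (or refine) the gadgets so that each strongly connected component of the resulting graph has at most six vertices, and modify edge labels so that the Cayley permanent, rather than the Cayley determinant, equals the desired $\sharpp$-valued function.

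Membership in $\sharpp$ is straightforward: with edge labels given as $2\times 2$ matrices over $\mathbb{K}$ whose entries have $\text{poly}(n)$-bit description, a non-deterministic polynomial time machine can guess $\sigma \in S_n$, compute the Cayley product $a_{1,\sigma(1)} \cdots a_{n,\sigma(n)}$ in polynomial time, and output any requested bit of a fixed entry of this product; summing over accepting paths recovers each bit of $\cperm(A)$.

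For hardness, I would examine Gentry's construction gadget by gadget. Each gadget is of constant size and encodes a constant-sized constraint (variable assignment, literal propagation, or clause satisfaction), and the strongly connected components of the overall graph coincide with these gadgets, glued together along boundary vertices. Where a gadget has more than six vertices, I would split it into a chain of smaller pieces connected by single-vertex cuts; the new cut-vertices belong to SCCs of size one, so this operation leaves the SCC-size bound at six while preserving every cycle cover and its weight.

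The main obstacle is replacing the Cayley determinant with the Cayley permanent. In Gentry's reduction the $\sign(\sigma)$ weighting is essential for cancelling contributions from spurious cycle covers; without it, the permanent would overcount. My plan is to re-introduce the sign via the $2\times 2$ matrix structure of $R$: within each constant-size SCC the local cycle covers split into finitely many sign classes, and I would pre- and post-multiply selected edge labels by fixed matrices $P_e, Q_e$ drawn from the $2\times 2$ matrix subalgebra so that the Cayley product along any positively signed local cover multiplies to $+I$ while any negatively signed local cover multiplies to $-I$. Since each SCC has size at most six and $R$ is a division algebra over a characteristic-zero field, only finitely many non-degenerate linear constraints on $P_e, Q_e$ need to be satisfied, and such matrices exist inside the $2\times 2$ subalgebra. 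The modified matrix $B_\varphi$ then satisfies $\cperm(B_\varphi) = c\cdot \cdet(A_\varphi)$ for an explicit nonzero scalar $c$, from which the number of satisfying assignments of $\varphi$ can be read off in polynomial time. Verifying that the sign-absorbing gadgets can be realized without inflating SCC size beyond six is the crux of the argument.
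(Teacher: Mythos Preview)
Your proposal has two genuine gaps.

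First, the gadget-splitting idea cannot work in Gentry's construction. The strongly connected components of the block barber-pole matrix are not clause or variable gadgets glued along boundary vertices; each component is a single directed cycle on the index set $\mathcal{I}_\ell=\{i:\iota_i=\ell\}$ (the instructions reading bit $\ell$), together with self-loops. Such a cycle admits exactly two local cycle covers: all self-loops, or the full cycle. If you insert a single-vertex cut anywhere in that cycle you destroy the full-cycle cover and with it the encoding of the assignment $x_\ell=1$. There is no way to break a length-$k$ directed cycle into smaller SCCs while preserving both of its cycle covers. The paper avoids this entirely by choosing the \emph{source} problem so that $|\mathcal{I}_\ell|$ is already small: it reduces from $\#2$-SAT in which every variable occurs at most three times (still $\sharpp$-hard by Roth), and observes that the length-$4$ product program for a $2$-literal clause reads each of its two variables exactly twice, so each bit is read at most $2\cdot 3=6$ times across the whole program. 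Hence every $\mathcal{I}_\ell$ has size at most $6$ with no splitting needed.

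Second, your sign-absorption machinery is unnecessary and mis-diagnoses the difficulty. In a barber-pole block there are only the two covers above; the identity cover has sign $+1$ and the full cycle has sign $(-1)^{|\mathcal{I}_\ell|-1}$. Gentry handles $\cdet$ by placing a scalar $(-1)^{|\mathcal{I}_\ell|-1}$ on one edge of each block. For $\cperm$ one simply \emph{omits} that scalar; no $P_e,Q_e$ matrices are required, and no spurious covers exist to cancel. The whole determinant-to-permanent adaptation is a one-line change.
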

\begin{proof}
    It is known that counting the number of satisfying assignments in a $2$-CNF formula where every
    variable occurs at most three times is already $\sharpp$ complete (\cite{Roth96}).  Let $\phi$
    be a 2-CNF where every variable occurs at most three times with $k$ clauses. We complete the
    proof by a careful analysis of the reduction given in Theorem~6 of \cite{Gentry14} applied to
    $\phi$.

    Lemma~\ref{lem:noncom:gentrylem2} gives a product program of length $2^{2}+2^{2-1}-2=4$ for
    computing a disjunction of two literals.  In fact the program for $x_{1}\vee x_2$ is given by
    $(1,(s,I_{2})),(2,(r,I_{2})),(1,(s,I_{2})),(2, (r^{-1},I_{2}))$ where $I_{2}$ is the $2\times 2$
    identity matrix in $R$.

    Let $t$ be a $2\times 2$ matrix as in Theorem~\ref{thm:noncom:gentrythm6}, namely $t$ has one in the
    upper left corner and zeros elsewhere.  Suppose $P_{c}$ is the product program as given above
    for the clause indexed by $c$ for $1\le c\le k$. Then product program for $\phi$ is given by
    $$
    P=\left(\prod_{1\le c\le k}t\cdot P_{c}\right) t.
    $$
    
    This immediately shows if every variable occurs at most three times in $\phi$, the product
    program above reads a bit of the input at most 6 times.  Let $\mathcal{I}_{\ell}$ as before
    where $1\le \ell\le 4k$. We have $\mathcal{I}_\ell\le 6$ for all $1\le \ell\le 4k$ by the above
    argument. Let $\mathcal{I}_{\ell}$ have the elements
    $i_{\ell,1},\dots,i_{\ell,|\mathcal{I}_{\ell}|}$.  Let $\pi_{0}$ be the identity permutation and
    $\pi_{1}(i_{\ell,\kappa})=i_{\ell,\kappa +1\mod |\mathcal{I}_{\ell}|}$.  Define the following
    permuted ``block barber pole'' (\cite{Gentry14}) matrix.
    \[
        M[i,j] = \begin{cases}
            a_{i,b}         &\text{ if $j=\pi_{b}(i)$}\\
            0              &\text{ otherwise.}
        \end{cases}
    \]
    While Gentry has in the cell $a_{i_{l,1},i_{l,2}}$ a different factor,
    $(-1)^{|\mathcal{I}_{\ell}|-1}$ to be precise, this factor is only canceling the sign of the
    determinant as can be seen in the proof of Theorem~\ref{thm:noncom:gentrythm5}.  As the rows and
    columns for $\mathcal{I}_{i},\mathcal{I}_{j}$ for $i\neq j$ are disjunct this matrix corresponds
    to cycles of length $|\mathcal{I}_{\ell}|\le 6$. This concludes the proof.
\end{proof}
It is known that   computing the commutative permanent of the weighted adjacency matrix of a planar graph is as hard as the general case~\cite{DKLM10}.  We observe that the reduction in~\cite{DKLM10} extends to the   non-commutative case.  
\begin{theorem}
\label{thm:planar}
 ${\cperm}\le_{m}^{p} {\planar\cperm}$; and
 ${\cdet}\le_{m}^{p} {\planar\cdet}$.
 Moreover, the above reductions work over any non-commutative algebra.
\end{theorem}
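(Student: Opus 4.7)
The plan is to re-examine the reduction of Datta et al.~\cite{DKLM10} and verify that every step survives the passage to the non-commutative setting. Given a directed graph $G$ on $n$ vertices with variable edge labels, I would first fix any drawing of $G$ in the plane with a polynomial number of edge crossings. At each crossing of two edges $(a,b)$ and $(c,d)$ I would insert the planar crossover gadget of~\cite{DKLM10} on a constant number of fresh vertices whose internal edges carry \emph{scalar} weights from $\mathbb{K}$; this gadget is designed so that, summing over cycle covers restricted to it, the net contribution faithfully simulates the crossing. Performing this at every crossing produces a planar graph $G'$ of polynomial size.

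The key observation for the non-commutative case is that, since $R$ is a $\mathbb{K}$-algebra, every element of $\mathbb{K}$ lies in the center of $R$, and hence all gadget weights commute with every variable and every other weight that can appear in the computation. I would number each gadget's fresh vertices to come immediately after the original vertex whose outgoing edge enters the gadget, so that the original edge variables of $G$ retain their relative order in the Cayley ordering on $G'$. In the Cayley product along any cycle cover $C'$ of $G'$, the scalar gadget contributions can then be freely pulled out, and the Cayley weight of $C'$ factors cleanly as (a gadget scalar) $\cdot$ (the Cayley weight of the cycle cover of $G$ obtained by projecting $C'$).

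Summing over all $C'$, the per-crossing scalars combine into a single global constant $\alpha\in \mathbb{K}$ and $\cperm(G')=\alpha\cdot\cperm(G)$. A single oracle call to $\planar\cperm$ on $G'$ followed by division by $\alpha$ recovers $\cperm(G)$; the reduction is polynomial-time since the number of crossings is polynomial and each gadget is of constant size. The argument for $\cdet$ is identical once one checks that the gadget of~\cite{DKLM10} contributes a fixed sign per crossing (the extra cycles it introduces have fixed parity), so the parity correction is absorbed into the global scalar $\alpha$, which still commutes with all of $R$.

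The main obstacle is essentially bookkeeping: one must check that the central-scalars argument applies uniformly at every crossing, and that the fresh vertices can be inserted into the Cayley ordering without shuffling the original variables. Both points are immediate from $\mathbb{K}$ being in the center of $R$ and from the freedom to choose the numbering of the fresh vertices. Beyond this, the combinatorial analysis of~\cite{DKLM10} transfers verbatim to the non-commutative setting, establishing both reductions over any non-commutative algebra $R$ that is a $\mathbb{K}$-algebra.
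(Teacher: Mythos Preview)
Your approach is essentially the paper's: both invoke the crossover gadget of~\cite{DKLM10} and argue that the reduction survives non-commutativity because the gadget's internal weights lie in the center of $R$. The paper's version is just the special case where every gadget edge carries weight~$1$ (so your global scalar $\alpha$ is $1$ and no post-division is needed), and since each original label $x_{u,v}$ is placed on an edge \emph{leaving} the original vertex $u$, it automatically stays in row $u$ of the Cayley product irrespective of how the new vertices are numbered---which means your vertex-numbering device is not actually needed (and, as stated, is slightly off: a crossing gadget is shared by two source vertices $u,u'$, and its fresh vertices cannot be placed ``immediately after'' both).
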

\begin{proof}
    The proof is essentially the same as in~\cite{DKLM10}. We give a brief sketch here for the sake of completeness. Let $G$ be a weighted digraph. Consider an arbitrary embedding $\mathcal{E}$ of $G$. Obtain a new graph  by changing the graph as follows:
    \begin{itemize}
        \item For each pair of edges $(u,v)$ and $(u',v')$
        that cross each other in the embedding $\mathcal{E}$, do the following:
        \item introduce two new vertices $a$ and $b$; and  
        \item  new edges $\{(a,b), (b,a), (u', a), (a,v), (u,b), (b,v')\}$
        replacing $(u,v)$ and $(u',v')$.
    \end{itemize}
    Note that any of the iterations above do not introduce any new crossings, and hence the process terminates after at most $O(n^2)$ many steps where $n$ is the number of vertices in $G$.
    Weight of $(u, v)$ is given to $(v,a)$ and $(u', v')$ is given to $(v',b)$.   The remaining edges get the weight $1$.
    By the construction, we can conclude that 
    $\cperm(G)= \cperm(G')$ and $\cdet(G)=\cdet(G')$. 
\end{proof}

We demonstrate some more families of  polynomials  whose commutative variants are easy but certain non-commutative variants are as hard as the permanent polynomial. 
We begin with a non-commutative variant of the elementary symmetric polynomial.
The elementary symmetric polynomial of degree $d$, $\sym_{n,d}$ is
given by
 $\sym_{n,d}(x_1,\ldots, x_n)= \sum_{\substack{S\subseteq [n],~|S|=d}}\prod_{i\in S}x_i.$
 There are several non-commutative variants of the above polynomial. The first one is  analogous to the Cayley permanent, i.e.,
${\sf Cayley-}\sym_{n,d}= \sum_{S=\{i_1 < i_2< \cdots< i_d \}} \prod_{j=1}^d x_{i_j}.$
 It is not hard to see that the above mentioned non-commutative version of ${\sf Cayley-}\sym_{n,d}$
 can be computed by depth 3 non-commutative circuits for every value of $d\in [n]$. However, the above definition is not satisfactory, since it is not invariant under permutation of variables, which is the inherent property of elementary symmetric polynomials. We define a variant of non-commutative elementary symmetric polynomial which is invariant under the permutation of variables.
 $${\sf nc-}{\sym_{n,d}}(x_1,\ldots,x_n)\stackrel{\triangle}{=} \sum_{\{i_1,\ldots, i_d\}\subseteq[n]} \sum_{\sigma\in S_d} \prod_{j=1}^d x_{i_{\sigma(j)}}.$$
We show that with coefficients from the algebra of $n\times n$ matrices allowed, ${\sf nc-}{\sym_{n,d}}$  cannot be computed by
polynomial size circuits unless ${\sf VP}={\sf VNP}$. We need the following definition introduced in~\cite{AJS09,AS10}.
\begin{definition}
    The \emph{Hadamard product} between two polynomials $f=\sum_{m}\alpha_{m}m$ and
    $g=\sum_{m}\beta_{m}m$, written as $f\odot g$, is defined as
    $f\odot g = \sum_{m}\alpha_{m} \beta_{m}m$.
\end{definition}

\begin{theorem}
\label{thm:sym}
Over any $\mathbb{K}$ algebra $R$ containing the $n\times n$ matrices as a sub-algebra,
${\sf nc-}{\sym_{n,n}}$ does not have polynomial size arithmetic circuits unless $\perm_n \in {\sf VP}$.
\end{theorem}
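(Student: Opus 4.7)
The plan is to reduce the commutative permanent $\perm_n$ to ${\nc\sym_{n,n}}$ via an explicit matrix substitution. I will exhibit $n \times n$ matrices $X_1, \ldots, X_n$ (living in $M_n(\mathbb{K}) \subseteq R$, with entries being the commuting variables $a_{k,l}$ of the permanent matrix $A$) such that ${\nc\sym_{n,n}}(X_1, \ldots, X_n) = \perm_n(A) \cdot I_n$. A polynomial-size circuit for ${\nc\sym_{n,n}}$ over $R$ then yields, via the standard $O(n^3)$-blow-up matrix-to-scalar simulation, a polynomial-size commutative circuit for $\perm_n$, placing $\perm_n \in {\sf VP}$.

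The key substitution is $X_i = \sum_{k=1}^{n} a_{k,i}\, E_{k,\, s(k)}$, where $s(k) = (k \bmod n) + 1$ is the cyclic successor on $[n]$ and $E_{p,q}$ is the matrix unit with a single $1$ at position $(p,q)$. Using $E_{p,q} E_{r,s} = \delta_{q,r} E_{p,s}$, a direct index-chase shows
\[
X_{i_1} \cdots X_{i_n} \;=\; \sum_{k=1}^{n} \Big(\prod_{j=1}^{n} a_{s^{j-1}(k),\, i_j}\Big)\, E_{k,k},
\]
because $s^{n}(k) = k$ collapses every non-zero contribution onto the diagonal. Summing over $\sigma \in S_n$ and reindexing via the bijection $p \mapsto j(p)$ determined by $s^{j-1}(k) = p$, each diagonal entry becomes $\sum_{\tau \in S_n} \prod_p a_{p, \tau(p)} = \perm_n(A)$, so ${\nc\sym_{n,n}}(X_1, \ldots, X_n) = \perm_n(A) \cdot I_n$.

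The final step is mechanical: in a polynomial-size circuit $C$ for ${\nc\sym_{n,n}}$ over $R$, replace each input $x_i$ by the matrix $X_i$ and simulate every internal gate by scalar arithmetic on the $n^2$ entries of the corresponding matrix, using $O(n^2)$ scalar operations per matrix addition and $O(n^3)$ per matrix multiplication. Reading off the $(1,1)$-entry of the output then yields a polynomial-size commutative circuit for $\perm_n(A)$.

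The main obstacle is justifying the scalar simulation when $C$ uses constants from all of $R$ rather than just $M_n(\mathbb{K})$. Since the polynomial ${\nc\sym_{n,n}}$ has $\mathbb{K}$-coefficients and the substituted output $\perm_n(A) \cdot I_n$ lies in $M_n(\mathbb{K}[a_{k,l}])$, only the $M_n(\mathbb{K})$-projection of each $R$-constant can survive in the output; this can be made precise by a coefficient-extraction argument (say, picking out the $(1,1)$-entry through a suitable linear functional on $R$ that restricts to the matrix-unit projection on $M_n(\mathbb{K})$), thereby keeping the scalar simulation polynomial. The substitution identity above is the combinatorial core of the argument.
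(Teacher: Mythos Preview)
Your substitution $X_i=\sum_{k} a_{k,i}\,E_{k,s(k)}$ and the identity $\nc\sym_{n,n}(X_1,\ldots,X_n)=\perm_n(A)\cdot I_n$ are correct, and the route is genuinely different from the paper's. The paper does not substitute matrices for the $x_i$ directly; it instead expresses $\perm(X)$ as the $y_i\mapsto 1$ specialisation of the Hadamard product $\bigl(\nc\sym_{n,n}(\ell y_1,\ldots,\ell y_n)\bigr)\odot P(X,Y)$, where $\ell=\sum_{i,j}x_{i,j}$ and $P(X,Y)=\prod_i\sum_j x_{i,j}y_j$ has an $O(n^2)$-size ABP, and then invokes the circuit\,$\odot$\,ABP closure of~\cite{AJS09,AS10} as a black box. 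Your argument replaces that black box by an explicit one-line matrix identity and is therefore more elementary and self-contained; the Hadamard-product formulation, on the other hand, is more modular and immediately yields the signed analogue (Corollary~\ref{cor:sign-sym}) by changing the auxiliary ABP.

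One caveat on your last paragraph: the linear-functional patch does not work. A $\mathbb K$-linear $\varphi:R\to\mathbb K$ extending the $(1,1)$-entry map on the matrix subalgebra lets you read $\perm_n(A)$ off the \emph{output} value, but it does not let you \emph{simulate the circuit over $\mathbb K$}, because $\varphi$ is not multiplicative and hence $\varphi$ of a product gate is not determined by $\varphi$ of its children; the assertion that ``only the $M_n(\mathbb K)$-projection of each $R$-constant can survive'' is false at internal gates. What both your argument and the paper's actually deliver is a small circuit for the (Cayley) permanent \emph{over $R$}; the further step to $\perm_n\in{\sf VP}$ over $\mathbb K$ requires that arithmetic in $R$ be simulable at polynomial cost over $\mathbb K$ (which holds for $R=M_n(\mathbb K)$ or any ${\sf poly}(n)$-dimensional $\mathbb K$-algebra). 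The paper is equally silent on this step, so your reduction is no weaker than the original.
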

\begin{proof}
Suppose that ${\sf nc-}{\sym_{n,n}}$ has a circuit $C$ of size polynomial in $n$. We need to show that
$\perm \in {\sf VP}$.
Let $X=(x_{i,j})_{1\le i,j\le n}$
be matrix of variables, and $y_1,\ldots, y_n$ be distinct variables different from $x_{i,j}$. 
In the commutative setting, it was observed in~\cite{Gat87} that $\perm(X)$ equals the coefficient of $y_1\cdots y_n$ in the polynomial
\begin{equation}
P(X,Y)\stackrel{\triangle}{=}\prod_{i=1}^n\left( \sum_{j=1}^n x_{i,j}y_j\right)
\label{eq:hammon}
\end{equation}
over the polynomial ring $\mathbb{K}[x_{1,1},\ldots, x_{n,n}]$.  However, the same cannot be said in the case of
non-commuting variables. 
If $x_{i,j}y_k =y_k x_{i,j}$ for $i,j,k\in [n]$, then  in the non-commutative development of
(\ref{eq:hammon}), the sum of coefficients of \emph{all} permutations of the monomial  $y_1\cdots
y_n$  equals $\perm(X)$ i.e., the commutative permanent. Hence the value $\perm(X)$ can be extracted using a Hadamard product with ${\sf nc\mbox{-}}\sym_{n,n}(y_1,\ldots, y_n)$ and then substituting $y_1=1,\ldots, y_n=1$. 
However, we cannot assume $x_{i,j}y_k =y_k x_{i,j}$, since the Hadamard product may not be computable under this assumption.
Let $\ell =\sum_{i,j}x_{i,j}$.  Now we argue that
$ \perm(X)= ({\sf nc}\mbox{-}\sym_{n,n}(\ell y_1,\ldots, \ell y_n) \odot P )(y_1=1,\ldots, y_n=1)$. 
     Given a permutation $\sigma\in S_n$, there is a unique monomial
     $m_{\sigma}=x_{1,\sigma(1)}y_{\sigma(1)}\cdots x_{n,\sigma(n)}y_{\sigma(n)}$ in $P$ containing
     the variables $y_{\sigma(1)},\ldots, y_{\sigma(n)}$ in that order.  Thus taking Hadamard
     product with $P$ filters out all monomials but $m_\sigma$ from the term
     $\prod_{i=1}^n\ell y_{\sigma(i)}$.  The monomials where a $y_j$ occurs more than once are
     eliminated by ${\sf nc}\mbox{-}\sym_{n,n}(\ell y_1,\ldots, \ell y_n)$. Thus the only monomials
     that survive in the Hadamard product are of the form $m_\sigma$, $\sigma\in S_n$. Now
     substituting $y_i=1$ for $i\in [n]$  we get $ \perm(X)= ({\sf nc}\mbox{-}\sym_{n,n}(\ell y_1,\ldots, \ell y_n) \odot P )(y_1=1,\ldots, y_n=1)$.

  Note that  the polynomial
$P(X,Y)$ can be computed by an ABP of size $O(n^2)$. Then, by~\cite{AJS09,AS10}, we obtain an arithmetic circuit $D$ of size $O(n^2\size(C))$ that computes the polynomial ${\sf nc-}{\sym_{n,n}}\odot P$. Substituting $y_1=1,\ldots, y_n=1$ in $D$ gives the required arithmetic circuit for $\perm(X)$.  
\end{proof}
 Further,  let  $\snc\sym_{n,n} (x_1,\ldots, x_n)\stackrel{\triangle}{=} \sum_{\sigma\in S_n}\sign(\sigma)\prod_{i=1}^n x_{i,\sigma(i)}.$ We have
\begin{corollary}\label{cor:sign-sym}
    Over a $\mathbb{K}$ algebra containing the  algebra of $n\times n$ matrices, $\snc\sym_{n,n}$ is not in ${\sf VP}$ unless $\cdet$  has polynomial size arithmetic circuits.
\end{corollary}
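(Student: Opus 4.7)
The plan is to mirror the proof of Theorem~\ref{thm:sym} verbatim, replacing ${\sf nc}\mbox{-}\sym_{n,n}$ by its signed variant $\snc\sym_{n,n}$ and $\perm$ by $\cdet$. The only thing to verify is that the Hadamard product identity used in that proof remains valid once signs are attached to each permutation.

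Concretely, assume $\snc\sym_{n,n}$ has a polynomial size arithmetic circuit $C$ over a $\mathbb{K}$-algebra $R$ containing $\mathbb{K}^{n\times n}$. Let $X=(x_{i,j})$ and $y_1,\dots,y_n$ be fresh variables, set $\ell = \sum_{i,j} x_{i,j}$, and define $P(X,Y) = \prod_{i=1}^n \bigl(\sum_{j=1}^n x_{i,j} y_j\bigr)$ as in Theorem~\ref{thm:sym}. The claim I would establish is the identity
\begin{equation*}
\cdet(X) \;=\; \bigl(\snc\sym_{n,n}(\ell y_1,\dots,\ell y_n) \odot P(X,Y)\bigr)\!\bigm|_{y_1=\cdots=y_n=1}.
\end{equation*}
To justify it, I would expand $\snc\sym_{n,n}(\ell y_1,\dots,\ell y_n) = \sum_{\sigma\in S_n}\sign(\sigma)\,\ell y_{\sigma(1)}\cdots \ell y_{\sigma(n)}$ and note that each summand becomes, after substituting $\ell=\sum_{k,m}x_{k,m}$, a sum of words of the form $x_{k_1,m_1}y_{\sigma(1)}x_{k_2,m_2}y_{\sigma(2)}\cdots x_{k_n,m_n}y_{\sigma(n)}$. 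The non-commutative expansion of $P$ consists of exactly the words $x_{1,\tau(1)}y_{\tau(1)}\cdots x_{n,\tau(n)}y_{\tau(n)}$ with coefficient $1$; matching letter positions forces $\tau=\sigma$ and $k_i=i$, $m_i=\sigma(i)$. Hence the Hadamard product equals $\sum_\sigma \sign(\sigma)\,x_{1,\sigma(1)}y_{\sigma(1)}\cdots x_{n,\sigma(n)}y_{\sigma(n)}$, which collapses to $\cdet(X)$ upon setting every $y_i=1$.

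Once the identity is in place, the final step is purely mechanical: $P(X,Y)$ is computable by an ABP of size $O(n^2)$, so the Hadamard-product closure from~\cite{AJS09,AS10} converts the hypothetical polynomial size circuit for $\snc\sym_{n,n}$ into a polynomial size circuit $D$ for $\snc\sym_{n,n}(\ell y_1,\dots,\ell y_n)\odot P$. Substituting $y_i=1$ in $D$ yields a polynomial size arithmetic circuit for $\cdet$, which is the desired conclusion.

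The only step that required genuine thought is the sign-bookkeeping in the Hadamard product: one must be sure that the $\sign(\sigma)$ attached by $\snc\sym_{n,n}$ survives the filtering against $P$ and ends up attached to the correct permutation's contribution in $\cdet$. Because $P$ is monic (each monomial in its expansion has coefficient $1$) and the word-matching forces $\tau=\sigma$ uniquely, the sign is preserved without extra cancellation; this is the one place where the argument genuinely uses properties of $P$ rather than being a formal rerun of Theorem~\ref{thm:sym}. No other obstacle arises, and the proof otherwise proceeds identically.
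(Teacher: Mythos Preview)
Your proposal is correct and is exactly the argument the paper intends: the corollary is stated without proof immediately after Theorem~\ref{thm:sym}, and the only change needed is to carry the sign $\sign(\sigma)$ through the Hadamard-product identity, which you verify carefully. One small imprecision worth noting is that in the expansion of $P$ the index $\tau$ ranges over \emph{all} maps $[n]\to[n]$, not just permutations; it is the matching against the $y$-pattern coming from $\snc\sym_{n,n}$ that forces $\tau$ to be a permutation, but your conclusion is unaffected.
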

While similar, this result is unrelated with the result of \cite{HWY210} as their ordered polynomials
is more related to ${\sf Cayley-}\sym_{n,d}$ than ${\sf nc-}{\sym_{n,n}}$.

Barvinok~\cite{Bar96} showed that  computing the permanent of an integer matrix of constant rank can be done in strong polynomial time. In a similar spirit, we explore the complexity of computing the Cayley permanent of bounded rank matrices with entries from $\mathbb{K}\cup\{x_1,\ldots, x_n\}$.  We consider the following notion of rank for matrices with variable entries. 
Let $A\in (\mathbb{K} \cup \{x_1,\ldots, x_n\})^{n \times n}$.   Then $
{\sf row}\mbox{-}{\sf rank}(A)= \max_{a_1,\ldots, a_n\in \mathbb{K}} {\sf rank}(A|_{x_1=a_1,\ldots, x_n=a_n})$. The column rank of $A$ is defined  analogously. As opposed to the case of the commutative permanent,
for any  algebra $R$  containing the algebra of $n\times n$ matrices over $\mathbb{K}$, we have:
\begin{corollary}\label{cor:boundedrank}
    $\cperm$ and $\cdet$ of rank one matrices with entries from $\mathbb{K}\cup \{x_{1},\dots,x_{n}\}$
    over any $\mathbb{K}$ algebra does not have polynomial size arithmetic circuits unless $\perm \in
    {\sf VP}$.
\end{corollary}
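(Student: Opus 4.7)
My plan is to reduce both statements to the previously established hardness of the two non-commutative symmetric polynomials, by exhibiting a single explicit rank-one matrix whose Cayley permanent equals ${\sf nc}\mbox{-}\sym_{n,n}$ and whose Cayley determinant equals $\snc\sym_{n,n}$.

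Concretely, I would let $M$ be the $n\times n$ matrix with $M_{i,j}=x_j$ for all $i,j\in[n]$, so that every row of $M$ is the vector $(x_1,\dots,x_n)$. For any scalar substitution $x_j\mapsto a_j\in\mathbb{K}$, the matrix $M|_{x=a}$ has all rows equal to $(a_1,\dots,a_n)$, so $\rank(M|_{x=a})\le 1$, and consequently ${\sf row}\mbox{-}\rank(M)\le 1$ in the sense of Section~\ref{sec:complete}. Directly from the definitions of Cayley permanent and determinant,
\begin{align*}
\cperm(M) &= \sum_{\sigma\in S_n} x_{\sigma(1)} x_{\sigma(2)}\cdots x_{\sigma(n)} = {\sf nc}\mbox{-}\sym_{n,n}(x_1,\dots,x_n),\\
\cdet(M)  &= \sum_{\sigma\in S_n} \sign(\sigma)\,x_{\sigma(1)}\cdots x_{\sigma(n)} = \snc\sym_{n,n}(x_1,\dots,x_n).
\end{align*}

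Given a polynomial size arithmetic circuit for $\cperm$ (respectively $\cdet$) on rank-one matrices over $R$, substituting the matrix $M$ into this circuit yields a polynomial size arithmetic circuit for ${\sf nc}\mbox{-}\sym_{n,n}$ (respectively $\snc\sym_{n,n}$). Theorem~\ref{thm:sym} then forces $\perm\in{\sf VP}$ in the Cayley permanent case, and Corollary~\ref{cor:sign-sym} gives polynomial size circuits for $\cdet$ in the Cayley determinant case, which by the Bl\"aser--Gentry $\sharpp$-hardness of $\cdet$ over such algebras entails the collapse asserted by the corollary.

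There is essentially no technical obstacle in this argument: the only real choice is identifying the correct rank-one matrix that exposes the symmetric-polynomial structure, and the rest is an immediate bookkeeping of definitions combined with the previously established theorems. The one subtlety worth stating carefully is that the notion of rank here is the maximum rank over $\mathbb{K}$-substitutions, which is why a matrix with variable entries can genuinely qualify as rank one.
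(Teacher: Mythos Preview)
Your proposal is correct and essentially identical to the paper's own proof: the paper also takes the matrix $A$ with $A[i,j]=x_j$, identifies $\cperm(A)={\sf nc}\mbox{-}\sym_{n,n}$ (and analogously for $\cdet$ and $\snc\sym_{n,n}$), and then invokes Theorem~\ref{thm:sym} and Corollary~\ref{cor:sign-sym}. Your added remark that Bl\"aser--Gentry hardness of $\cdet$ is needed to convert the conclusion of Corollary~\ref{cor:sign-sym} into ``$\perm\in{\sf VP}$'' is a useful clarification that the paper leaves implicit.
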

\begin{proof}
We will argue the case of $\cperm$.
Let $x_1,\ldots, x_n$ be non-commuting variables. Consider the matrix $A$ with $A[i,j]= x_j$, $1\le
i,j\le n$. $A$  has rank one over $\mathbb{K}$. We then have
${\sf nc-}{\sym_{n,n}}(x_1,\ldots ,x_n)= {\cperm}(A)$. 
The result now follows by applying Theorem~\ref{thm:sym}.
For $\cdet$, we can use Corollary~\ref{cor:sign-sym} in place of Theorem~\ref{thm:sym} in the argument above.
\end{proof}

\section{Computational problems on non-commutative circuits}

\subsection*{Computing Coefficients}
In this section we consider  various computational problems on arithmetic circuits, restricted to the non-commutative setting. We start with the problem of computing  the coefficient of a given monomial in the polynomial computed by an arithmetic circuit.  
In the commutative setting, the problem lies in the second level of the counting hierarchy~\cite{FMM12} and is known to be hard for $\#P$~\cite{Mal07}. It was first seen 
in~\cite{AMS08} that $\mcoeff$ is easy to compute in the non-commutative case.
We  provide a different proof of the fact as it is useful in the arguments used later in this section.

\begin{problem}[Monomial Coefficient(\mcoeff)]
{\em Input}: A non-commutative arithmetic circuit $C$, a non-commutative monomial $m$ of degree $d$.\\
{\em Output}: The coefficient of monomial $m$ in the polynomial computed by $C$.
\end{problem}
\begin{theorem}
\label{thm:mcoeff}\cite{AMS08}
$\mcoeff$ is in {\sf P}.
\end{theorem}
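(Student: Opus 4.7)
The plan is a bottom-up dynamic programming over the gates of $C$. Fix the target monomial $m=x_{i_1}\cdots x_{i_d}$ and, for every pair $0\le l \le r \le d$, let $m_{l,r}$ denote the contiguous substring $x_{i_l}\cdots x_{i_r}$; by convention $m_{l,l-1}$ is the empty monomial (so its ``coefficient'' in $p_g$ is just the constant term of $p_g$). For every gate $g$ of $C$ and every such pair $(l,r)$, I would compute and store the coefficient $\alpha_g(l,r)$ of $m_{l,r}$ in the polynomial $p_g$. At input gates this is immediate: a constant $c\in R$ contributes to the empty interval only, and a variable $x$ contributes $1$ to exactly those length-one intervals $(l,l)$ with $x_{i_l}=x$.

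The recursion exploits the defining property of non-commutative multiplication. At an addition gate $g=f+h$, $\alpha_g(l,r)=\alpha_f(l,r)+\alpha_h(l,r)$. At a product gate $g=f\times h$, every monomial occurrence $m_{l,r}$ in $p_g$ arises as a concatenation of a prefix monomial from $p_f$ with a suffix monomial from $p_h$, in that order and without overlap or reordering; hence
\[
\alpha_g(l,r) \;=\; \sum_{k=l-1}^{r}\alpha_f(l,k)\cdot \alpha_h(k+1,r),
\]
where the boundary values $k=l-1$ and $k=r$ account for an empty prefix or suffix. Processing the gates in topological order, the answer $\mcoeff(C,m)$ is read off as $\alpha_{\text{out}}(1,d)$.

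For the complexity, there are $O(d^2)$ pairs $(l,r)$ per gate and the recursion at a product gate does $O(d)$ ring operations on intermediate coefficients, giving a total running time of $O(\size(C)\cdot d^3)$ arithmetic operations. The assumption made in Section~\ref{sec:prelim} that every coefficient arising at a gate of $C$ can be represented in $\mathrm{poly}(\size(C))$ bits also bounds the bit-complexity of the coefficients $\alpha_g(l,r)$, since each is (up to sign) a partial coefficient of $p_g$. This keeps the overall algorithm in polynomial time.

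The only conceptual obstacle is isolating exactly why the argument breaks in the commutative world: there, a monomial of $p_{f\times h}$ equal to $m_{l,r}$ can be split into any subset of positions going to $f$ and its complement going to $h$, which destroys the polynomial-size table of subproblems. Here non-commutativity forces the split to be at a single cut point, which is what makes the dynamic program small. Everything else (boundary cases for empty intervals, correctness of the base cases, and tracking the constant term) is routine case analysis.
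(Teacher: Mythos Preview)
Your proposal is correct and is essentially the same argument as the paper's: both set up a bottom-up dynamic program that stores, for every gate $g$ and every contiguous substring $m_{l,r}$ of the target monomial, the coefficient of $m_{l,r}$ in $p_g$, use the prefix--suffix split at product gates, and arrive at $O(d^{3}\cdot\size(C))$ ring operations with $O(d^{2})$ stored values per gate. Your explicit treatment of the empty-interval boundary cases and the remark on why the scheme fails commutatively are nice additions, but the core method is identical.
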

\begin{proof}
Suppose that the monomial $m =x_{j_1}\cdots x_{j_d}$  and is given as  an ordered listing of variables.  Let $f$ be a non-commutative polynomial. Then we have the following recursive formulation for the coefficient function 
${\mc}:\mathbb{K}\{x_1,\ldots, x_n\}\times {\cal M} \to \mathbb{K}$, where ${\cal M}$ is the set of all non-commutative monomials  in variables $\{x_1\ldots, x_n\}$.
\begin{eqnarray}
   \mc(f, m) = \begin{cases} \alpha &\text{if $f=\alpha y_j$ and $m=y_j$},\\
                             0 &\text{if $f=\alpha y_j$ and $m=y_i, i\neq j$ },\\
                             \mc(g, m)+ \mc(h) &\text{if $f= g+h$},\\
                             \sum_{\ell=1}^{d+1} \mc(g, m_\ell)\times \mc (h, m_{\ell}') &\text{if $f= g\times h$}.
               \end{cases}
   \label{eq:mc}
\end{eqnarray}
where $m_\ell =x_{i_1}\cdots x_{i_{\ell-1}}$ and $m_{\ell}' = x_{i_\ell}\cdots x_{i_d}$.
However, if we apply the above recursive definition on the circuit $C$ in a straightforward fashion,  the time required to compute $\mc(f, m)$ will be $d^{O(\depth(C))}$, since $\depth(C)$ could be as big as $\size(C)$, the running time would be exponential.  However, we can have a more careful implementation of the above formulation by allowing a little more space.

 For $\ell <k \in [1,d]$, let $m_{\ell,k}= x_{i_{\ell+1}}\cdots x_{i_{k}}$, and $M = \{m_{\ell,k}~|~  0\le \ell \le d-1, 0\le k\le d \}$. 
Consider a gate $v$ in the circuit $C$. Note that  in the process of computing  $\mc(f, m)$, we require only the values  from the set $ M_v=\{\mc(p_v, m')~|~ m'\in M\}$, where $p_v$ is the polynomial computed at $v$.  Thus it is enough to compute and maintain the values $\mc(p_v,m'), m'\in M$ in a bottom up fashion.   For the base case,
compute the values for  polynomials computed at  a  leaf gate $v$  as follows, let $m'\in M$ and $\alpha \in R$
\begin{eqnarray*}
   \mc(p_v, m') = \begin{cases} \alpha~\text{ if $p_v=\alpha\in R$, and $m'=\emptyset$},\\
                               \alpha~\text{ if $p_v=\alpha x_j, \alpha\in R$, and $m'=x_j$},\\
                               0~\text{ otherwise.}   
                   \end{cases}
\end{eqnarray*}    
For other nodes, we can apply the recursive formula given in (\ref{eq:mc}).
If $p_v$ = $p_{v_1}+ p_{v_2}$, then the value $\mc(p_v, m')$ can be computed using (\ref{eq:mc}) as the values 
$ \mc(p_{v_1},m')$ and $\mc(p_{v_2},m')$ are available by induction.
If $p_{v}= p_{v_1}\times p_{v_2}$, then  by induction, the values $\mc(p_{v_i}, m'' )$  are available for prefix and suffix of the monomial $m'$, as every such monomial occurs as $m_{i,j\in M}$ for some $i<j$. Now, $\mc(p_v, m')$ can be computed by (\ref{eq:mc}). 
For the space bound, the  algorithm uses $O(d^2)$ registers   for each gate in $C$ and hence the overall space  used is $O(d^2\size(C))$  many registers.  For a given monomial $m'\in M$, at most $d$ arithmetic operations are required in the worst case. Thus, the number of arithmetic operations is  bounded by $O(d^3\size(C))$. 
\end{proof}
\subsection*{Coefficient function as a polynomial}
In the commutative setting, the coefficient function of a given polynomial can be represented as a
polynomial~\cite{Mal07}. Thus it is desirable to study the arithmetic circuit complexity of
coefficient functions. However, over non-commutative rings, we need a carefully chosen
representation of monomials to obtain an arithmetic circuit that computes the coefficient function for a given polynomial with small circuits.
In the  proof of Theorem~\ref{thm:mcoeff},  we have  used an ordered listing of variables as a representation of the monomial $m$. Here we use a vector representation for non-commutative monomials of a given degree $d$.
Let $Y= \{y_{1,1}, \ldots, y_{1,n}, y_{2,1}, \ldots y_{d,n}\}$ be a set of $nd$ distinct variables, and let $\tilde{Y}_i = (y_{i,1},\ldots, y_{i
,n})$.  The vector of variables $\tilde{Y}_\ell$ can be seen as representing the characteristic vector of $x_{i_\ell}$, i.e., $y_{i, i_{\ell}} 
=1$, and $y_{i,j}=0, \forall j\neq i_{\ell}$. In essence, $y_{i,j}$ stands for the variable $x_j$ at the $i$-th position in the monomial.
Let $f(x_1,\ldots, x_n)$ be a polynomial of degree $d$, then we can define the coefficient polynomial $\pc_{f}(Y)$ as
$$
   \pc_f(Y)= \sum_{D=1}^d\ \sum_{(i_1,\ldots, i_D)\in [n]^D}\ \prod_{\ell=1}^D\left[\mc(f, x_{i_1}\cdots x_{i_D}) y_{\ell, i_{\ell}}
   \prod_{j\neq k} (1-y_{\ell, j} y_{\ell,k}) \right]. 
$$

\begin{theorem}
\label{thm:pc}
For any non-commutative polynomial  $f$ that can be computed by a polynomial size arithmetic circuit, $\pc_f(Y)$ has a polynomial size arithmetic circuit.
\end{theorem}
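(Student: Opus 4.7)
The plan is to mimic symbolically the dynamic-programming argument used to prove Theorem~\ref{thm:mcoeff}. For each gate $v$ of $C$ and each pair of indices $0 \le \ell \le k \le d$, I would introduce a sub-polynomial
$$
G_{v,\ell,k}(Y) \;\stackrel{\triangle}{=}\; \sum_{(i_{\ell+1},\dots,i_k)\in [n]^{k-\ell}} \mc(p_v,\, x_{i_{\ell+1}}\cdots x_{i_k})\;\prod_{j=\ell+1}^{k} y_{j,i_j},
$$
i.e., a generating polynomial, in the block of variables $\tilde Y_{\ell+1},\dots,\tilde Y_k$, whose monomials encode each length-$(k-\ell)$ word and whose scalar weights are the corresponding coefficients in $p_v$. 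The target polynomial $\pc_f(Y)$ will then be obtained from $G_{v_{\text{out}},0,D}$ for each degree $D$, multiplied by ``characteristic vector'' indicators on each block.

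Next I would construct circuits for the $G_{v,\ell,k}$ by structural induction on $C$. The base cases are immediate: if $p_v=\alpha$, set $G_{v,\ell,\ell}=\alpha$ and $G_{v,\ell,k}=0$ for $k>\ell$; if $p_v=\alpha x_j$, set $G_{v,\ell,\ell+1}=\alpha y_{\ell+1,j}$ and $G_{v,\ell,k}=0$ otherwise. For an addition gate $v=v_1+v_2$, put $G_{v,\ell,k}=G_{v_1,\ell,k}+G_{v_2,\ell,k}$. For a product gate $v=v_1\times v_2$, put
$$
G_{v,\ell,k} \;=\; \sum_{m=\ell}^{k} G_{v_1,\ell,m}\cdot G_{v_2,m,k}.
$$
Correctness of the recursion follows directly from the $\mc$ recursion~\eqref{eq:mc} used in the proof of Theorem~\ref{thm:mcoeff}, separating the sum over split points $m$ from the sums over prefix and suffix indices. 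Since there are $O(d^2)$ polynomials $G_{v,\ell,k}$ per gate $v$, and each product gate contributes at most $O(d)$ new additions/multiplications per pair $(\ell,k)$, the overall circuit size accumulated so far is $O(d^3\cdot\size(C))$.

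Finally, I would assemble $\pc_f$ by combining the block indicators with the generating polynomials at the output gate. Letting $I_\ell(Y) = \prod_{j\neq k}(1-y_{\ell,j}y_{\ell,k})$, which is computable by a straight line product of $O(n^2)$ factors, the definition of $\pc_f$ gives
$$
\pc_f(Y) \;=\; \sum_{D=1}^{d} G_{v_{\text{out}},0,D}(Y)\cdot \prod_{\ell=1}^{D} I_\ell(Y),
$$
and each of the $d$ summands shares the prefix product $\prod_{\ell=1}^{D-1} I_\ell$, so the indicators are built incrementally in $O(n^2 d)$ extra gates. The total size is therefore $\mathrm{poly}(\size(C),n,d)$.

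The main conceptual hurdle is teasing apart the two roles played by the $Y$ variables in the formula defining $\pc_f$: on the one hand, each $\tilde Y_\ell$ indexes ``which $x_j$ sits in position $\ell$'', which is handled entirely by the symbolic recursion on $G_{v,\ell,k}$; on the other hand, the block indicators $I_\ell$ enforce that each $\tilde Y_\ell$ behaves like a characteristic vector. Once the generating-function reformulation is in place, the inductive construction is essentially a mechanical transcription of Theorem~\ref{thm:mcoeff} into the circuit model, and the size bound falls out of counting the $O(d^2)$ sub-polynomials per gate.
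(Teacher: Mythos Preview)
Your proposal is correct and follows essentially the same approach as the paper: both parameterize by a window of $Y$-blocks (your pair $(\ell,k)$ is exactly the paper's $Y^{\ell,k}$), build the sub-polynomials by structural induction on $C$ using the $\mc$ recursion~\eqref{eq:mc} from Theorem~\ref{thm:mcoeff}, and arrive at the same $O(d^3\,\size(C))$ bound. Your version is in fact a bit more streamlined, since by factoring the block indicators $I_\ell$ out of the recursion you avoid the paper's double sum over both split point and degree at product gates; the one caveat is that in the non-commutative free algebra your final assembly $G_{v_{\mathrm{out}},0,D}\cdot\prod_{\ell}I_\ell$ does not literally equal $[\pc_f]_D$ as defined (there the factors $y_{\ell,i_\ell}$ and $I_\ell$ are interleaved), so to match the stated polynomial exactly you should thread each $I_\ell$ into position $\ell$ rather than post-multiply them all at once --- a cosmetic fix that costs no extra gates.
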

\begin{proof}[Proof of Theorem~\ref{thm:pc}]
We will apply (\ref{eq:mc}) to obtain an arithmetic circuit computing the polynomial $\pc_f(Y)$.
Let $C$ be an arithmetic  circuit of size $s$, computing $f$. By induction on the structure of $C$, we construct a circuit $C'$ for $\pc_f(Y)$. Note that, it is enough to compute homogeneous degree $D$ components $[\pc_f(Y)]_D$ of $\pc_f(Y)$, where
$$
   [pc_f(Y)]_D = \sum_{(i_1,\ldots, i_D)\in [n]^D}\ \prod_{\ell=1}^D\left[\mc(f, x_{i_1}\cdots x_{i_D}) y_{\ell, i_{\ell}}
   \prod_{j\neq k} (1-y_{\ell, j} y_{\ell,k}) \right].
$$
Let $Y^{i,j}$ denote the set of variables in the vectors $\tilde{Y}_{i+1}, \ldots, \tilde{Y}_j$. 
 In the base case, we have  $C=\gamma\in \{x_1,\ldots,x_n\}\cup R$. Then the  all of the homogeneous components of 
$\pc_f(Y)$ can be described as follows.
\begin{align*}
   [\pc_f(Y)]_0 \, =&      \begin{cases}
                           \gamma &\text{if $Y=\emptyset$ and $\gamma \in R$}\\
                           0      &\text{otherwise.}
                       \end{cases}\\
   [\pc_f(Y)]_1 \, =&     \begin{cases}
                           1      &\text{if $Y=e_j$ and $\gamma = x_j$}\\
                           0      &\text{if $Y=\emptyset$ and $\gamma \in R$}\\
                           0      &\text{otherwise.}
                       \end{cases}\\
   [\pc_f(Y)]_{i>1} \, =&  0.
\end{align*}

Naturally, the induction step has two cases: $f=g+h$ and $f=g\cdot h$.\\ 
{\em Case 1:}  $f=g+h$, then for any $D$
$$[\pc_f(Y)]_D = [\pc_g(Y)]_D +[\pc_h(Y)]_D~~~\forall~D.$$
{\em Case 2:} $f=g\times h$, then for any $D$
$$ [\pc_f(Y)]_D = \sum_{i=0}^d \sum_{j=0}^D [\pc_g(Y^{1,i})]_j[\pc_h(Y^{i+1,d})]_{D-j}$$   
where $Y=\tilde{y}_1,\ldots, \tilde{y}_d$.
The size of the resulting circuit $C'$  is $O(d^3\size(C))$, and $C'$ can in fact be computed in time $O(d^3\size(C))$ given $C$ as the input.  
\end{proof}

\subsection*{Partial Coefficient functions}
For a given commutative  polynomial let $f(X)=\sum_{m}c_m m$, the partial coefficient of a given monomial $m$ (\cite{Mal07}) is a polynomial defined as
$\pcoeff(f,m)=\sum_{m', m|m'} c_{m'} \frac{m'}{m}$.

We extend the above definition to the case of non-commutative polynomials as follows. Let $f$ be non-commutative polynomial, and $m$ a non-commutative monomial. Then
$\pcoeff(f,m){=}\sum_{m'=m\cdot m''} c_{m'}m''$.

The corresponding computational problem can be defined in the following way.
\begin{problem}[Coefficient Polynomial (\pcoeff)]
{\em Input}: A non-commutative arithmetic circuit $C$ computing a polynomial $f$, and a monomial $m$.\\
{\em Output}: A non-commutative arithmetic circuit that computes $\pcoeff(f,m)$. 
\end{problem}
\begin{theorem}
\label{thm:pcoeff}
$\pcoeff$ can be computed in deterministic  time ${\sf poly}(\size(C), n, \deg(m))$.
\end{theorem}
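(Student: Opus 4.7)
The plan is to extend the bottom-up dynamic program used in Theorems~\ref{thm:mcoeff} and~\ref{thm:pc} to compute, at every gate $v$ of $C$ and every suffix of the input monomial $m$, a sub-circuit for the corresponding partial coefficient. Writing $m = x_{j_1}\cdots x_{j_d}$ and $m^{(k)} = x_{j_k}\cdots x_{j_d}$ for $1 \le k \le d+1$ (with $m^{(d+1)}$ the empty monomial), I will process the gates of $C$ in topological order and construct sub-circuits $C_{v,k}$ computing $\pcoeff(p_v, m^{(k)})$; the required output circuit is then $C_{v_{\mathrm{out}}, 1}$.

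The construction is governed by a local recursion analogous to~(\ref{eq:mc}). Leaves give immediate base cases ($\pcoeff(\alpha, \emptyset)=\alpha$, $\pcoeff(x_j,\emptyset)=x_j$, $\pcoeff(x_j,x_j)=1$, and zero otherwise), and sums distribute: $\pcoeff(p_{v_1}+p_{v_2}, m^{(k)}) = \pcoeff(p_{v_1}, m^{(k)}) + \pcoeff(p_{v_2}, m^{(k)})$. The key case is a product gate $p_v = p_{v_1}\cdot p_{v_2}$. Splitting a monomial $m'$ of $p_v$ at the product boundary as $m' = u\cdot v$, the requirement $m' = m^{(k)}m''$ forces either $u = m^{(k)}\cdot u_1$ (so $m^{(k)}$ is consumed entirely by the $p_{v_1}$ factor) or $u = x_{j_k}\cdots x_{j_{\ell-1}}$ is a strict prefix of $m^{(k)}$ and $v$ begins with $m^{(\ell)}$ for some $\ell\in\{k,k+1,\ldots,d\}$. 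Summing these contributions,
\[
\pcoeff(p_v, m^{(k)}) \;=\; \pcoeff(p_{v_1}, m^{(k)})\cdot p_{v_2} \;+\; \sum_{\ell=k}^{d} \mc\bigl(p_{v_1},\, x_{j_k}\cdots x_{j_{\ell-1}}\bigr)\cdot \pcoeff(p_{v_2}, m^{(\ell)}),
\]
where the $\ell=k$ term interprets the empty prefix as $\mc(p_{v_1}, \emptyset)$, the constant term of $p_{v_1}$.

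To turn this recursion into a polynomial-size construction, I would first invoke Theorem~\ref{thm:mcoeff} to pre-compute and store, as scalar constants, the values $\mc(p_v, x_{j_a}\cdots x_{j_b})$ for each gate $v$ of $C$ and each contiguous factor of $m$; there are $O(\size(C)\cdot d^2)$ such queries, each handled in time ${\sf poly}(\size(C), d)$. Then process the gates of $C$ in topological order, wiring $C_{v,k}$ from the previously built sub-circuits for the children of $v$ together with the sub-circuit for $p_{v_2}$ already present in $C$, attaching the pre-computed $\mc$-constants as new leaves. Each pair $(v,k)$ contributes $O(d)$ additional gates in the product case, giving a final circuit of size $O(d^2\cdot\size(C))$ produced in time ${\sf poly}(\size(C), n, d)$. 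Correctness is immediate once the case split in the product step is checked; the only point I expect to require care is separating when $m^{(k)}$ is consumed exactly by the $p_{v_1}$ factor (with $u_1=\emptyset$, absorbed into the $\pcoeff(p_{v_1}, m^{(k)})\cdot p_{v_2}$ summand) from when it strictly overflows into $p_{v_2}$ (handled by the sum over $\ell$), so that no monomial of $p_v$ is double-counted or missed.
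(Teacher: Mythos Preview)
Your proposal is correct and follows essentially the same approach as the paper: both derive the product-gate recursion
\[
\pcoeff(g\cdot h,\,m)=\sum_{i=0}^{d-1}\mc(g,m_i)\,\pcoeff(h,m'_i)+\pcoeff(g,m)\cdot\pcoeff(h,\emptyset)
\]
(your formula is this with the change of index $\ell=k+i$), pre-compute the $\mc$ values via Theorem~\ref{thm:mcoeff}, and fill in a table indexed by gates and sub-monomials bottom-up. Your version is slightly tighter in that you correctly observe only \emph{suffixes} of $m$ are needed for the $\pcoeff$ table (the paper stores $\pcoeff(p_v,m_{i,j})$ for all contiguous factors, which is more than necessary), and you are explicit about avoiding double-counting at the boundary $u=m^{(k)}$; but the core argument is identical.
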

\begin{proof}
The algorithm is similar to the proof of Theorem~\ref{thm:mcoeff}, except that we need to construct
an arithmetic circuit rather than a value. We use the following recursive formulation similar to
(\ref{eq:mc}).

If $f=\alpha\in R\cup \{x_{1},\dots,x_{n}\}$ and $m=\emptyset$ then $\pcoeff(f,m)= \alpha$. For the
summation $f=g+h$ we compute $\pcoeff(f,m) = \pcoeff(g,m) + \pcoeff(h,m)$. The final case to handle
is a multiplication gate. We define shorthand for sets of variables. Let $m=x_{1}\cdots x_{d}$,
$m_{i}=x_{1}\cdots x_{i}$ and $m'_{i} = x_{i+1}\cdots x_{d}$ the rest of the monomial. We define
$m_{0}=\emptyset$. Then
$\pcoeff(f,m)=\sum_{i=0}^{d-1}\mc(g,m_{i})\pcoeff(h,m'_{i})+\pcoeff(g,m)\cdot\pcoeff(h,\emptyset)$.
The rest of the proof is analogous to that of Theorem~\ref{thm:mcoeff} except that, we need to compute and store  the values $\mc(p_v, m_{i,j})$, and $\pcoeff(p_v,m_{i,j} )$ for every gate $v$ in the circuit in a bottom up fashion. 
\end{proof}
 {\em Acknowledgements: }
The authors like to thank V.\ Arvind  and Markus Bl\"aser for helpful discussions and pointing out specific problems to work on.  The authors also thank anonymous referees for their comments which helped in  improving the presentation. 
This work was partially done while  the first author was visiting IIT Madras sponsored by  the Indo-Max-Planck Center for Computer Science. 

\bibliographystyle{abbrv}

\newpage
\appendix
\section{Permanent as a sum over cycle covers}
 Let $G$ be a weighted directed graph on $n$ vertices and   $x_{i,j}$ denote the weight of the directed edge $(i,j)\in E(G)$, the Cayley permanent of $G$ denoted by $\cperm(G)$ is the permanent of the weighted adjacency matrix of $G$.
A cycle cover of $G$ is a collection of vertex disjoint cycles $\pi=(C_1,\ldots C_k)$ that cover all the vertices in $G$. Cayley weight of a cycle cover is the product of weights of edges in the cover, multiplied in the Cayley order, i.e., if $\pi(i)$ denotes the successor of node $i$ in the cycle cover $\pi$, then the weight of the cover $\pi$ is $\prod_{i=1}^{n}x_{i,\pi(i)}$. It is known that~\cite{Bur00} $\cperm(G)$ is the sum of the Cayley weights of all cycle covers of $G$.

\section{Proof Sketch for Proposition~\ref{prop:expsum}}
\label{app:expsum}
Note that the proof given in~\cite{MR13} uses the equivalence of skew circuits with ABPs, which does
not hold in the non-commutative setting. Our argument is similar to the one in~\cite{MR13} except
that we argue over ABPs themselves rather than skew circuits. We give a sketch of the proof here.
Let $P$ be an ABP computing the non-com\-mu\-ta\-tive polynomial given by $p_P(x_1,\ldots, x_n,
y_1,\ldots, y_m)$. Let $0=i_0<i_1<\dots<i_m$ be the layers of $P$ that witness the fact that $P$ is
read-once certified in $Y=\{y_1,\ldots, y_m\}$. Without loss of generality assume that every layer
of $P$ has exactly $r$ nodes.  Let $g^{j}_1,\ldots, g^{j}_r$  be the nodes in the layer $i_j$. Note that variable $y_j$ is read  in layers between $i_{j-1} $ to $i_j$ and is never used beyond that point and no other variable from $Y$ appears in layers between $i_{j-1}$ and $i_{j}$.  Let $P_j$ be the  portion of $P$ consisting only of layers of $P$ from $i_{j-1}$ to $i_{j}$. Let $p_j[\ell , k]$
be the polynomial represented as sum of weights of $g^{j}_{\ell}\leadsto g^{j+1}_k$ paths in $P$. Then
\begin{align*}
  p_P = \sum_{k_1,\ldots, k_m} p_0[s,k_1] p_1[k_1, k_2]\cdots p_{m-2}[k_{m-2},k_{m-1}]\cdot p_{m-1}[k_{m-1},t].
\end{align*}  
For $0\le j< m$, $k,k'\in\{1,\ldots r\}$ and $b\in\{0,1\}$, let $p_{j}^b[k,k'] =
p_{j}[k,k'](y_{j+1}=b)$ where we substitute $y_{j+1}$ with $b$. Then
\[
    \sum_{e_1,\ldots, e_m\in \{0,1\}}p_P(X, e_1,\ldots, e_m)
\]
is equal to
\begin{align*}
  \sum_{k_1,\ldots, k_m} \bigl[\,&(p_{0}^0[s,k_1]+p_{0}^{1}[s,k_1]) \cdot ( p_{1}^0[k_1, k_2]+p_{1}^{1}[k_1, k_2])
  \cdot \cdots \\ 
  &\cdot  (p_{m-2}^{0}[k_{m-2},k_{m-1}]+p_{m-2}^{1}[k_{m-2},k_{m-1}])  \\
  &\cdot  (p_{m-1}^{0}[k_{m-1},t]+ p_{m-1}^{1}[k_{m-1},t])\,\bigr].
\end{align*}
Thus we can take sums of $P_j$ with $y_j=0$ and $y_j=1$ independent of $P_{j'}$ when $j\neq j'$. In
the following we describe this construction and omit the proof of correctness and the bound on the
size of the resulting ABP.  Now create two copies $P_{j}^{0}$ and $P_{j}^{1}$ where $P_{j}^b$ is
obtained by setting $y_j=b$ for $b\in\{0,1\}$. For every $1\le k\le t$, merge the copies of
$g^{j}_k$ in $P_{j}^{0}$ and $P_{j}^{1}$ into a single vertex $h^{j}_k$, and similarly copies of
$g^{j+1}_k$ in $P_{j}^{0}$ and $P_{j}^{1}$ into a single vertex $h^{j+1}_k$. Let
$Q_j$ be the resulting program.  Let $Q$ be the ABP $Q_1\cdot Q_2 \cdots Q_m$ obtained by doing the
following for every $j$: For $1\le k\le t$ glue the copies of $h^{j}_k$ in the top layer of $Q_j$
and first layer of $Q_{j+1}$ to get a single vertex.  Let $Q$ be the resulting ABP.  From the
observation above, we have
\[
    p_Q = \sum_{e_1,\ldots, e_m\in\{0,1\}} p_P(x_1,\ldots, x_n, e_1,\ldots, e_m)
\]
as required. By the construction above, size of $Q$ is at most twice that of $P$. This completes
the proof.

\section{Recap of Gentry's Proof}
For clarity we will repeat the proof of Gentry with some corrections by Goldreich. Readers familiar
with Gentry's Proof can skip this section. We will denote the inverse of an element $g$ in a group
by $g^{-1}$.

\begin{definition}
    Let $R$ be some algebra. A product program $P$ over $R$ with $n$ instructions for an
    input of length $m$ is given by
    $P=\left(
        a_{0},\left(\iota_{1},a_{1,0},a_{1,1}\right),\dots,\left(\iota_{n},a_{n,0},a_{n,1}\right)
    \right)$.
    Where we call the sequence of length $n$ of the form
    $\left( \iota_{i},a_{i,0},a_{i,1}\right)_{i\in [n]} \in ([m],R,R)^{n}$ the instructions
    and $a_{0}$ the starting element.

    It computes on an input $x_{1},\dots,x_{m}\in \{0,1\}$ the product
    \[
        a_{0}\cdot \prod_{i=1}^{n} a_{i,x_{\iota_{i}}}.
    \]
\end{definition}
In words, our product program decides for every instruction $(\iota_{i},a_{i,0},a_{i,1})$ if it
should multiply $a_{i,0}$, if the bit of $x$ at the position $\iota_{i}$ is zero, or $a_{i,1}$, if
the bit of $x$ at the position $\iota_{i}$ is one. We will
generally not distinguish between an input as vector of length $m$ or a string of length $m$ and
will index the string $x$ with $x_{i}$ to mean the $i$th bit.

\begin{lemma}[{\cite[Lemma 2]{Gentry14}}]\label{lem:noncom:gentrylem2}
    For any division algebra $R$, the group of units of $R^{2\times 2}$ contains a subgroup
    isomorphic to $S_{3}$. In particular, $R^{2\times 2}$ contains the matrices
    \[
        r=\begin{pmatrix} 0& -1\\ 1& -1\end{pmatrix}\; \text{and}\;
        s=\begin{pmatrix} 0& 1\\ 1& 0\end{pmatrix}.
    \]
\end{lemma}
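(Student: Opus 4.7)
The plan is to exhibit an explicit isomorphism by verifying that $r$ and $s$ satisfy the standard presentation of $S_3$ (viewed as the dihedral group $D_3$), namely $\langle r,s \mid r^3 = s^2 = 1,\ srs = r^{-1}\rangle$. Since $R$ is a division algebra over a field $\mathbb{K}$, the scalars $0$, $1$, $-1$ live in the center, so $r$ and $s$ are genuinely matrices in $R^{2\times 2}$ and the computations proceed by the usual $2\times 2$ matrix arithmetic; no non-commutativity issues intervene because all entries are central scalars.

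First I would check that $r$ and $s$ are units in $R^{2\times 2}$. The matrix $s$ has $s^{2}=I$ by direct multiplication, hence $s^{-1}=s$ and $s$ has order $2$. For $r$ I would compute $r^{2}=\bigl(\begin{smallmatrix} -1 & 1\\ -1 & 0\end{smallmatrix}\bigr)$ and then $r^{3}=r\cdot r^{2}=I$, so $r$ is a unit of order $3$ with $r^{-1}=r^{2}$.

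Next I would verify the dihedral relation $srs = r^{-1}$. A two-step matrix multiplication gives $rs = \bigl(\begin{smallmatrix} -1 & 0\\ -1 & 1\end{smallmatrix}\bigr)$ and then $srs = \bigl(\begin{smallmatrix} -1 & 1\\ -1 & 0\end{smallmatrix}\bigr) = r^{2} = r^{-1}$. Together with $r^{3}=s^{2}=1$ this is exactly the presentation of $S_{3}$, so the subgroup $\langle r,s\rangle$ of the unit group of $R^{2\times 2}$ is a homomorphic image of $S_{3}$.

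Finally I would argue that this homomorphism is actually an isomorphism, i.e.\ that $\langle r,s\rangle$ has order six rather than collapsing. The six elements $\{I, r, r^{2}, s, rs, r^{2}s\}$ can be written out explicitly as $2\times 2$ matrices with entries in $\{0,\pm 1\}$, and a quick inspection shows they are pairwise distinct (they have distinct entry patterns). I do not foresee a genuine obstacle here; the only thing to watch is that the presentation we exploit really is a presentation of $S_{3}$, which is classical. Since all entries of $r$ and $s$ are central, this argument works uniformly over any division algebra $R$, giving the lemma.
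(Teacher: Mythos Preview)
Your proposal is correct and follows essentially the same approach as the paper: both verify by direct matrix computation that $s$ has order two, $r$ has order three, and that the appropriate dihedral-type relation holds, then conclude that $\langle r,s\rangle$ has exactly six elements. The only cosmetic difference is in the final identification with $S_3$: the paper invokes the fact that $S_3$ is the unique non-abelian group of order six, whereas you appeal directly to the standard presentation $\langle r,s \mid r^3=s^2=1,\ srs=r^{-1}\rangle$ and check distinctness of the six matrices explicitly.
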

\begin{proof}
    It is clear that $R^{2\times 2}$ contains these matrices. It is easy to see that $r$ has
    order three and $s$ order two and hence by Lagrange's Theorem they generate a group of order at
    least six.  We know that the only elements in the group are $1,s,r,r^{2},rs,sr$ as $rs=sr^{2}$
    and $r^{2}s=sr$ and $s$ has order two. As $S_{3}$ is the only non-abelian group of order six the
    lemma holds.
\end{proof}

With this we can now show a product program that outputs one if an assignment satisfies a 3-CNF
formula and zero else.
\begin{lemma}[{\cite[Lemma 3]{Gentry14}}]\label{lem:noncom:gentrylem3}
    There exists a product program of length $2^{d}+2^{d-1}-2$ over the group $S_{3}$ that computes
    a disjunction of $d$ literals. It outputs $1$ if $x$ satisfies the disjunction and $r$
    otherwise.
\end{lemma}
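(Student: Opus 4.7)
The plan is to prove the lemma by induction on $d$, building a product program over $S_3$ whose evaluation is the identity $1$ on any satisfying assignment and $r$ on the all-false assignment. The length recurrence implicit in $2^d+2^{d-1}-2$ is $\ell(d) = 2 + 2\ell(d-1)$ with $\ell(1)=1$, which suggests the Barrington-style commutator construction: combine a short program for the literal $\ell_1$ with an inductive program for $\ell_2\vee\cdots\vee\ell_d$ via a commutator of the form $QPQ^{-1}P^{-1}$. The non-commutativity provided by Lemma~\ref{lem:noncom:gentrylem2} is exactly what makes this commutator non-trivial on the all-false assignment.

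For the base case $d=1$, a single instruction $(\iota_1, r, 1)$ (or $(\iota_1, 1, r)$ for a negated literal) suffices, yielding length $1 = 2^1 + 2^0 - 2$, with output $1$ on a satisfying assignment and $r$ otherwise. For the inductive step, I would first construct a one-instruction program $Q$ for the literal $\ell_1$ that outputs $s$ (rather than $r$) on the falsifying value: $Q$ is $(\iota_1, s, 1)$ or $(\iota_1, 1, s)$ according to the polarity of $\ell_1$. By induction there is a product program $P$ of length $\ell(d-1)$ for $\ell_2\vee\cdots\vee\ell_d$ evaluating to $1$ on satisfying and $r$ on all-false inputs. I would then take the product program $QPQ^{-1}P^{-1}$, noting that inverting a product program of length $n$ produces a product program of length $n$ (reverse the instruction sequence and invert the two labels of each instruction). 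The total length is $1+\ell(d-1)+1+\ell(d-1) = 2^d+2^{d-1}-2$, matching the claim.

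To verify correctness, I split on cases: if $\ell_1$ is true, then $Q=1$ and the product collapses to $PP^{-1}=1$ regardless of $P$; if some $\ell_j$ with $j\ge 2$ is true, then $P=1$ and the product collapses to $QQ^{-1}=1$; in the all-false case, $Q=s$ and $P=r$, so (using $s^{-1}=s$) the product is $srsr^{-1}$, and the $S_3$ relation $srs=r^{-1}$ together with $r^3=1$ gives $srsr^{-1} = r^{-1}\cdot r^{-1} = r^{-2} = r$, as required.

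The main obstacle is essentially algebraic bookkeeping rather than any deep idea: one must pick the pair of non-commuting elements from Lemma~\ref{lem:noncom:gentrylem2} so that the commutator computed on the all-false input gives exactly $r$ (not a conjugate or inverse), so that the output normalization needed by the induction hypothesis is preserved at each level; using $s$ at the leaf and $r$ from the recursive call, together with the single identity $srs=r^{-1}$, achieves this. A secondary technical point is the precise definition of the inverse of a product program, but this is immediate from the definition given in the excerpt.
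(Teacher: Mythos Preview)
Your proposal is correct and is essentially the same argument as the paper's: both proceed by induction, handling the new literal with a one-instruction program that outputs $s$ on the falsifying value and $1$ otherwise, and combining it with the inductive program $P$ via the commutator $QPQ^{-1}P^{-1}$ (the paper writes the second $Q$ as $b_{x_d}$ again rather than $Q^{-1}$, which is the same since $s^{-1}=s$). The length recurrence and the $S_3$ calculation $srsr^{-1}=r$ are identical; the only cosmetic differences are that you peel off the first literal rather than the last and you explicitly treat negated literals.
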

\begin{proof}
    Let $1$ be the multiplicative neutral element of the $R^{2\times 2}$.
    
    We give a proof by induction and assume $a_{0}$ to be $1$. Let $d=1$. As $a_{1,0}=r$ and
    $a_{1,1}=1$ the proof is clear. Let us now assume that the lemma is true for $d-1$ and let
    $P_{d-1}$ be the constructed product program. Let $x_{1},\dots,x_{n}$ be the bits of our input
    and $b_{0}=s$ and $b_{1}=1$. Then we construct the program such that the multiplication will be
    performed as follows:
    \[
        b_{x_{d}}\cdot P_{d-1}(x_{1},\dots,x_{d-1}) \cdot b_{x_{d}} \cdot \left(P_{d-1}(x_{1},\dots,x_{d-1})\right)^{-1}.
    \]
    Here $\left(P_{d-1}(x_{1},\dots,x_{d-1}\right)^{-1}$ is replacing all instructions
    $(\iota_{i},(\alpha,\beta))$ by the corresponding instruction
    $(\iota_{i},(\alpha^{-1},\beta^{-1}))$.
    
    Let us now prove the correctness. If $x_{i}=1$ then it is clear that the layer $i$ evaluates to
    one as one commutes with all elements, especially $P_{d-1}$ and $P_{d-1}^{-1}$. Hence all layers
    above will also evaluate to one as $b_{0}^{2}=b_{1}^{2}=1$.

    If all bits of the input are zero then $P_{d-1}(x_{1},\dots,x_{d-1})=r$ by induction,
    $b_{x_{d}}=s$ and hence $P_{d}(x_{1},\dots,x_{d})=srsr^{-1}$. By the equalities above this is
    equal to
    \begin{equation*}
       s(rs)r^{-1} = s(sr^{2})r^{-1} = s^{2}r = 1r.
    \end{equation*}
\end{proof}

Let
\[
    t=\begin{pmatrix} 1& 0\\ 0& 0\end{pmatrix}.
\]
\begin{theorem}[{\cite[Theorem 6]{Gentry14}}]\label{thm:noncom:gentrythm6}
    For any division algebra $R$ and any constant $d$ one can construct a product program of
    length $k(2^{d}+2^{d-1}-2)$ for a $d$-CNF formula with $k$ clauses. It outputs $t$ if the
    formula is satisfied by $x$ and $0$ otherwise.
\end{theorem}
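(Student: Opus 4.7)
The plan is to concatenate the per-clause product programs furnished by Lemma~\ref{lem:noncom:gentrylem3}, separating them with copies of the rank-one idempotent $t$, which will act as a filter that annihilates any contribution coming from an assignment leaving some clause unsatisfied.

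For a $d$-CNF formula with clauses $c_1,\ldots,c_k$, let $P_c$ denote the product program of length $2^d+2^{d-1}-2$ guaranteed by Lemma~\ref{lem:noncom:gentrylem3}, which evaluates to $1$ if clause $c$ is satisfied by $x$ and to $r$ otherwise. I will form
$$
P \;=\; t \cdot P_1 \cdot t \cdot P_2 \cdot\, \cdots\, \cdot t \cdot P_k \cdot t.
$$
Since $t$ is a constant (not input-dependent), the $k+1$ occurrences of $t$ can be absorbed without increasing the instruction count: set the starting element $a_0 := t$, left-multiply the operands of the first instruction of each $P_c$ (for $c \ge 2$) by $t$, and right-multiply the operands of the last instruction of $P_k$ by $t$. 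The resulting program has exactly $k(2^d+2^{d-1}-2)$ input-dependent instructions.

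Correctness then reduces to two direct matrix identities: $t^2 = t$ (immediate) and $trt = 0$. For the latter, one computes
$$
tr = \begin{pmatrix} 1 & 0 \\ 0 & 0 \end{pmatrix}\begin{pmatrix} 0 & -1 \\ 1 & -1 \end{pmatrix} = \begin{pmatrix} 0 & -1 \\ 0 & 0 \end{pmatrix},
$$
so $(tr)t$ has zero second row and zero first column, and hence equals $0$. Given these, the analysis is immediate: if every $P_c(x) = 1$ then $P(x)$ collapses to $t^{k+1} = t$; if some $P_{c_j}(x) = r$, then the sub-product $t\cdot r\cdot t$ occurring inside $P$ equals $0$, so $P(x) = 0$ regardless of the remaining factors.

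There is no substantive obstacle here beyond the two matrix calculations and the bookkeeping that shows multiplication by the constant $t$ can be folded into adjacent input-dependent instructions so as not to inflate the program length. The genuine content of the construction—the conversion of a single clause into a product program valued in $\{1,r\}$—has already been carried out in Lemma~\ref{lem:noncom:gentrylem3}.
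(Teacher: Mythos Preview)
Your proposal is correct and follows essentially the same approach as the paper: both concatenate the per-clause programs from Lemma~\ref{lem:noncom:gentrylem3} with copies of $t$ inserted as separators, and both establish correctness via the identities $t^2=t$ and $trt=0$. The only (minor) difference is bookkeeping: the paper realises each $t$ as an extra instruction of the form $(1,(t,t))$, whereas you fold the constant $t$'s into $a_0$ and into the operands of adjacent instructions, which has the advantage of hitting the stated instruction count $k(2^d+2^{d-1}-2)$ on the nose.
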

\begin{proof}
    Let our clauses be given by $c_{1},\dots,c_{k}$. By Lemma~\ref{lem:noncom:gentrylem3} we get a
    product program $P_{c_{i}}$ for every clause $c_{i}$. Then we construct our product program $P$
    to compute the multiplication as follows:
    \[
        \left(\prod_{i=1}^{k} t\cdot P_{c_{i}}(x_{1},\dots,x_{m})\right)t.
    \]
    Here the multiplication with $t$ can easily be simulated with an instruction of the form $(1,(t,t))$.
    
    For ease of notation we have written the complete variable set for our clause product programs
    but we can easily remove unneeded variables from a clause.

    Let us give a correctness argument. Suppose one of the clauses is not satisfied. This then
    contributes a value of $t\cdot r$ to our product. It can be seen that this program has only two
    possible outcomes if one equation is not fulfilled. Namely,
    \begin{align*}
      t \cdot r &= \begin{pmatrix} 0& -1\\ 0& 0\end{pmatrix}\\
      \intertext{and for any $g$}\\
      t \cdot r \cdot t \cdot g&= \begin{pmatrix} 0& 0\\ 0& 0\end{pmatrix},
    \end{align*}
    as $t\cdot r\cdot t=0$. Notice, that we will always multiply the value $t\cdot r$ from an
    unsatisfied clause with $t$ to the right. Hence, by associativity the resulting matrix will
    always be zero.

    If all clauses are fulfilled the value computed is $t^{k+1}$. However, $t$ is idempotent in
    $R^{2\times 2}$ and hence is equal to $t$.
\end{proof}

Let $P$ be the product program as in Theorem~\ref{thm:noncom:gentrythm6}. Then it is obvious that
\[
    \sharpsat(\phi) = \left(\sum_{e\in \{0,1\}^{m}} P_{\phi}(e)\right)_{(1,1)},
\]
the first entry in the resulting matrix, as every satisfying assignment contributes exactly one and
every unsatisfied assignment zero. To compute the sum with the Cayley determinant we will use the
following special matrix form.
\begin{definition}
    We say a $n\times n$ matrix $M$ is a \emph{barber pole matrix} if it is of the form
    \[
        M[i,j]=\begin{cases}
        \alpha_{i} &\text{if $i=j$,}\\
        \beta_{i}  &\text{if $i=j+1\mod n$,}\\
        0         &\text{otherwise.}
        \end{cases}
    \]
    for $\alpha_{i},\beta_{i}$ non zero.
\end{definition}

Notice, that for every barber pole matrix there exists only two cycle covers. Either the one where
every vertex takes a self-loop or the single cycle.
\begin{theorem}[{\cite[Theorem 5]{Gentry14}}]\label{thm:noncom:gentrythm5}
    The value $\sum_{e\in \{0,1\}^{m}} P_{\phi}(e)$ can be computed by the Cayley determinant of a
    matrix of size $n\times n$ over the algebra $R^{2\times 2}$.
\end{theorem}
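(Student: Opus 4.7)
The plan is to encode the product program $P_\phi$ into a single $n\times n$ matrix over $R^{2\times 2}$ so that its cycle covers biject with the assignments $e\in\{0,1\}^m$, and the Cayley product along each cycle cover reproduces the value of $P_\phi(e)$ on that assignment. First I would partition the instruction positions $[n]$ into the groups $\mathcal{I}_\ell = \{i\in[n]\colon \iota_i=\ell\}$ for $1\le\ell\le m$, i.e., the set of positions that read the bit $e_\ell$. The key observation is that along any execution of the program, all instructions in a single group $\mathcal{I}_\ell$ are forced to make the same choice $e_\ell\in\{0,1\}$. Hence summing over all $e\in\{0,1\}^m$ corresponds to independently choosing, for each group $\mathcal{I}_\ell$, between the partial product in which every position $i\in\mathcal{I}_\ell$ contributes $a_{i,0}$ and the one in which every position contributes $a_{i,1}$.

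Next I would realize these two local options as the only two cycle covers of a small ``barber pole'' block on the index set $\mathcal{I}_\ell$. Listing $\mathcal{I}_\ell=\{i_{\ell,1}<i_{\ell,2}<\dots<i_{\ell,|\mathcal{I}_\ell|}\}$, define an $n\times n$ matrix $M$ whose nonzero entries are
\[
M[i_{\ell,\kappa},i_{\ell,\kappa}]=a_{i_{\ell,\kappa},0}, \qquad M[i_{\ell,\kappa},i_{\ell,\kappa+1\bmod |\mathcal{I}_\ell|}]=a_{i_{\ell,\kappa},1},
\]
for every $\ell$ and $\kappa$, with all entries between different $\mathcal{I}_\ell$'s equal to zero. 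Because the supports of different blocks are disjoint, every cycle cover $\sigma$ of $M$ decomposes into one cycle cover per group, and each group admits exactly two choices: all self-loops (corresponding to $e_\ell=0$) or the single $|\mathcal{I}_\ell|$-cycle (corresponding to $e_\ell=1$). Thus cycle covers of $M$ stand in bijection with assignments $e\in\{0,1\}^m$.

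The third step is to verify that the Cayley product along a cycle cover reproduces $P_\phi(e)$. Since the rows of $M$ are indexed by $1,2,\dots,n$ and the nonzero entry in row $i$ is $a_{i,e_{\iota_i}}$ regardless of whether $i$ sits on a self-loop or on the big cycle of its block, the Cayley product $\prod_{i=1}^n M[i,\sigma(i)]$ is exactly $\prod_{i=1}^n a_{i,e_{\iota_i}}$, which differs from $P_\phi(e)$ only by the left factor $a_0$. I would absorb $a_0$ either by multiplying by a fixed matrix from the outside (if we are computing the sum as a matrix product), or, more cleanly, by folding $a_0$ into the weight of the outgoing edge from the first index $i_{1,1}$ (which, without loss of generality, is position~$1$).

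Finally I would handle signs. A self-loop cycle contributes sign $+1$, while the single $|\mathcal{I}_\ell|$-cycle contributes sign $(-1)^{|\mathcal{I}_\ell|-1}$; since distinct blocks contribute independently, the overall sign of $\sigma$ is $\prod_\ell (-1)^{(|\mathcal{I}_\ell|-1)[e_\ell=1]}$. To turn the alternating Cayley determinant into the flat sum $\sum_e P_\phi(e)$, I would premultiply the ``closing'' edge weight of each block, $M[i_{\ell,|\mathcal{I}_\ell|},i_{\ell,1}]$, by the correcting factor $(-1)^{|\mathcal{I}_\ell|-1}$, which is precisely the tweak noted in Gentry's original construction and in the proof of Theorem~\ref{thm:gentry}. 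With this correction every cycle cover contributes with sign $+1$, and
\[
\mathrm{\cdet}(M)\;=\;\sum_{e\in\{0,1\}^m} P_\phi(e),
\]
as desired. The main obstacle is the sign bookkeeping: verifying that the Cayley sign over the full permutation $\sigma$ factorises cleanly over the blocks $\mathcal{I}_\ell$ and that the local sign correction $(-1)^{|\mathcal{I}_\ell|-1}$ lies in the centre enough to be pulled out of the non-commutative Cayley product without disturbing the order of the $a_{i,b}$'s. This is true because $-1\in\mathbb{K}$ is central in $R^{2\times 2}$, so the correction commutes past all matrix factors and only contributes a scalar that cancels the permutation sign of the corresponding cycle.
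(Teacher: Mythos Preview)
Your proposal is correct and follows essentially the same construction as the paper (and Gentry): partition the instruction indices by the input bit they read, build a block barber-pole matrix so that each block admits exactly two cycle covers (all self-loops versus the single long cycle), observe that the Cayley row order reproduces the product program's output, and cancel the cycle's sign with a central scalar $(-1)^{|\mathcal{I}_\ell|-1}$ placed on one edge of the block. The only cosmetic differences are that the paper attaches the sign correction to the first edge of each cycle rather than the closing one, and it explicitly pads every $\mathcal{I}_\ell$ to size at least two so that the self-loop and the ``$1$-cycle'' do not collide on the same matrix entry---a degenerate case your write-up should also note.
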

\begin{proof}
    Let $P=\left(1,\left(\iota_{1},a_{1,0},a_{1,1}\right),\dots,\left(\iota_{n},a_{n,0},a_{n,1}\right)\right)$. Let
    \[
        \mathcal{I}_{\ell} = \{i\in [n] \mid \text{the $i$th instruction uses the $\ell$th bit of the
            input} \}.
    \]
    Let $\mathcal{I}_{\ell}$ have the instructions
    $i_{\ell,1},\dots,i_{\ell,\lvert \mathcal{I}_{\ell}\rvert}$. Let $\pi_{0}$ be the identity
    permutation and
    $\pi_{1}(i_{\ell,\kappa})=i_{\ell,\kappa +1 \mod \lvert \mathcal{I}_{\ell}\rvert }$ the
    ``shifted'' permutation. Notice, that this corresponds to multiple cyclic permutation,
    consisting of cycles of length $\lvert \mathcal{I}_{\ell}\rvert$ for $2\leq \ell \leq n$ where
    the elements of the cycle are the elements in $\mathcal{I}_{\ell}$. They are ordered in the
    natural order of the instructions. Furthermore, we pad every set $\mathcal{I}_{\ell}$ to have
    size at least two.

    Then we define the matrix
    \[
        M[i,j] = \begin{cases}
            (-1)^{\lvert \mathcal{I}_{k}\rvert-1} a_{i,0}  &\text{ if $i=i_{k,1}$, $j=i_{k,2}$ for
                some $k$,}\\
            a_{i,b}                                     &\text{ otherwise, if $j=\pi_{b}(i)$,}\\
            0                                          &\text{ otherwise.}
        \end{cases}
    \]
    Left to show is that $\cdet(M)$ is indeed computing the value of the product program. Let us
    look at this matrix a bit closer and assume the position $M[i_{k,1},i_{k,2}]$ is $a_{i,0}$. We can
    see that this is a block barber pole matrix where the entry not on the diagonal are permuted.
    
    Let us generate the matrix only for the set of instructions
    $\mathcal{I}_{\ell} = \{i_{\ell,1},\dots,i_{\ell,\lvert \mathcal{I}_{\ell}\rvert}\}$. At first
    we add all the entries for $\pi_{0}$ which are just on the diagonal. If we now look at
    $\pi_{1}$, we see that the first entry we add is at position $(1,i_{\ell,2})$ where $i_{\ell,2}$
    is the index of the next instruction. We continue this until $\pi_{1}$ wraps around. This is
    clearly a cycle in the graph represented by the matrix.

    It is now clear that $\pi_{1}$ produces a cycle for every $\mathcal{I}_{\ell}$. In essence it
    enforces that we either take all self-loops or all elements corresponding to $\pi_{1}$. Meaning
    we either multiply all values in the instructions asking $x_{\ell}$ for $x_{\ell}$ being zero or
    all values for the instructions where $x_{\ell}$ is one.

    By this argument it is clear that one cycle cover is the value of the product program where we
    chosen every bit of the input and hence the value of all cycle cover is
    $\sum_{e\in\{0,1\}^{m}} P(e)$. Now it is easy to see that the actual value of
    $M[i_{k,1},i_{k,2}]$ exactly cancels the sign the determinant introduces.
\end{proof}

As the Cayley permanent is equal to the Cayley determinant for this construction we get the
following corollary.
\begin{corollary}\label{cor:noncom:gentryperm}
    The value $\sum_{e\in \{0,1\}^{m}} P_{\phi}(e)$ can be computed by the Cayley permanent of a
    matrix of size $n\times n$ over the algebra $R^{2\times 2}$ by removing the scalar factor of $(-1)^{\lvert\mathcal{I}_{k}\rvert-1}$.
\end{corollary}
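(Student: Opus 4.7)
\textbf{Proof proposal for Corollary~\ref{cor:noncom:gentryperm}.}
The plan is to reuse verbatim the construction and analysis from Theorem~\ref{thm:noncom:gentrythm5}, changing only how signs are handled. Concretely, I would define the matrix $M'$ exactly as $M$ in the proof of Theorem~\ref{thm:noncom:gentrythm5}, except that the entries $M'[i_{k,1},i_{k,2}]$ are set to $a_{i,0}$ rather than $(-1)^{\lvert\mathcal{I}_k\rvert-1}a_{i,0}$. The claim to establish is
\[
  \cperm(M') \;=\; \sum_{e\in\{0,1\}^{m}} P_\phi(e).
\]

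First, I would observe that the cycle-cover structure of $M'$ is identical to that of $M$: the block barber-pole shape is unchanged, so within each set $\mathcal{I}_\ell$ of instructions reading input bit $\ell$, the only cycle covers locally available are (i) all self-loops, corresponding to $\pi_0$ and thus to $x_\ell=0$, or (ii) a single cycle of length $\lvert\mathcal{I}_\ell\rvert$ on $\mathcal{I}_\ell$, corresponding to $\pi_1$ and thus to $x_\ell=1$. Consequently the cycle covers of $M'$ are in the same bijection with assignments $e\in\{0,1\}^m$ that the proof of Theorem~\ref{thm:noncom:gentrythm5} established, and the Cayley-ordered product of edge weights along the cover associated to $e$ equals $P_\phi(e)$ (this is exactly where the ordering of the rows/columns by instruction index was chosen for in Gentry's argument).

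Second, I would appeal to the standard fact, recorded in Appendix~A of the excerpt, that $\cperm$ of a weighted digraph equals the sum of Cayley weights of its cycle covers with no sign. In the determinant case the factor $(-1)^{\lvert\mathcal{I}_\ell\rvert-1}$ was inserted precisely to cancel the signature of the $\lvert\mathcal{I}_\ell\rvert$-cycle produced by $\pi_1$ on $\mathcal{I}_\ell$; for the permanent there is nothing to cancel, so dropping the factor is not merely allowed but required. Summing over all cycle covers of $M'$ then yields $\sum_{e\in\{0,1\}^m}P_\phi(e)$, which is what we wanted.

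Since the combinatorics of cycle covers and the Cayley-ordered weighting were the entire content of the proof of Theorem~\ref{thm:noncom:gentrythm5}, no further work is needed; the only subtle point (and the only place where one might slip) is checking that removing the sign factor really does produce $+1$ rather than $-1$ for each of the two local cover choices, which follows immediately from the sign-free definition of $\cperm$. Combined with Theorem~\ref{thm:noncom:gentrythm6} and the observation that $\sharpsat(\phi)$ is the $(1,1)$-entry of $\sum_e P_\phi(e)$, this gives the claimed $\sharpp$-hardness of $\cperm$ on graphs built from these cycle blocks.
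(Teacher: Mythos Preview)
Your proposal is correct and matches the paper's approach: the paper treats this corollary as an immediate consequence, remarking only that ``the Cayley permanent is equal to the Cayley determinant for this construction'' once the sign factor is dropped. Your write-up simply spells out the details behind that sentence---that the block barber-pole cycle-cover structure and the bijection with assignments are unchanged, and that the $(-1)^{\lvert\mathcal{I}_k\rvert-1}$ factor existed solely to cancel the determinant's signature---so there is no divergence in strategy.
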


\end{document}